\documentclass[journal]{IEEEtran}
\usepackage[T1]{fontenc}
\usepackage[latin9]{inputenc}
\usepackage{float}
\usepackage{bm}
\usepackage{amsmath}
\usepackage{amsthm}
\usepackage{amssymb}
\usepackage{graphicx}
\usepackage[unicode=true,
 bookmarks=true,bookmarksnumbered=true,bookmarksopen=true,bookmarksopenlevel=1,
 breaklinks=false,pdfborder={0 0 0},pdfborderstyle={},backref=false,colorlinks=false]
 {hyperref}
\hypersetup{pdftitle={Your Title},
 pdfauthor={Your Name},
 pdfpagelayout=OneColumn, pdfnewwindow=true, pdfstartview=XYZ, plainpages=false}

\makeatletter

\floatstyle{ruled}
\newfloat{algorithm}{tbp}{loa}
\providecommand{\algorithmname}{Algorithm}
\floatname{algorithm}{\protect\algorithmname}

\usepackage[caption=false,font=footnotesize]{subfig}
\usepackage{bm}

\usepackage{cite}
\usepackage[T1]{fontenc}
\usepackage{algorithm}
\usepackage{algorithmic}

\IEEEoverridecommandlockouts

\ifdefined\showcaptionsetup
 \PassOptionsToPackage{caption=false}{subfig}
\fi
\usepackage{subfig}
\makeatother

\theoremstyle{plain}
\newtheorem{prop}{\protect\propositionname}
\newtheorem{lem}{\protect\lemmaname}
\newtheorem{thm}{\protect\theoremname}
\theoremstyle{definition}
\newtheorem{defn}{\protect\definitionname}
\providecommand{\definitionname}{Definition}
\providecommand{\lemmaname}{Lemma}
\providecommand{\propositionname}{Proposition}
\providecommand{\theoremname}{Theorem}

\begin{document}
\title{Partially Observable Restless Bandits for Age-Optimal Scheduling over
Markov Channels}
\author{Xijun~Wang, Shuying Gan, Yanzhi~Huang, Xiaoyu Zhao, Chao Xu, and
Xiang~Chen\thanks{Part of this work was presented at the IEEE/CIC ICCC, Aug. 2020 \cite{9238787}.}\thanks{X. Wang, S. Gan, and X. Chen are with School of Electronics and Information
Technology, Sun Yat-sen University, Guangzhou, 510006, China (e-mail:
wangxijun@mail.sysu.edu.cn; ganshy7@mail2.sysu.edu.cn; chenxiang@mail.sysu.edu.cn).
}\thanks{Y. Huang is with Guangzhou Huizhi Communication Co.,Ltd., Guangzhou,
China. This work was done when he was with School of Electronics and
Communication Engineering, Sun Yat-sen University, Guangzhou, China
(e-mail: huangyz53@163.com).}\thanks{X. Zhao is with School of Cyber Science and Engineering Southeast
University Nanjing, Jiangsu, China (e-mail: xy-zhao@seu.edu.cn).}\thanks{C. Xu is with School of Information Engineering, Northwest A\&F University,
Yangling, Shaanxi, China (e-mail: cxu@nwafu.edu.cn). }}
\maketitle
\begin{abstract}
There is a surge of need for fresh information with the overwhelming
proliferation of the Internet of Things (IoT) applications. To characterize
the information freshness perceived by the destination, the age of
information (AoI) has been proposed. In this paper, we consider an
IoT system with multiple devices sending status update packets to
a central controller through time-correlated Markov channels and assume
that the instantaneous channel states are not available to the central
controller before making scheduling decisions. To ensure information
freshness, we investigate a timely scheduling problem that minimizes
the total expected time-average AoI under a strict communications
bandwidth constraint. We formulate this problem as a partially observable
restless multi-armed bandit problem. Using Lagrangian relaxation,
we decouple the relaxed problem into multiple sub-problems and prove
the threshold structure of their optimal policies. Armed with this
property, we establish the indexability for the decoupled problem
and design an algorithm to compute the Whittle's index.{}
To reduce implementation complexity, we further derive the Whittle-like
index in closed-form{} for low-complexity scheduling.
Simulation results show that the proposed index-based policies outperform
the baselines, remain close to the optimal policy or relaxed lower
bound, and are especially effective when scheduling resources are
limited or the network size is large.
\end{abstract}

\begin{IEEEkeywords}
Age of information, Markov channel, Whittle's index, Restless multi-armed
bandit.
\end{IEEEkeywords}

\section{Introduction}

There is a rapidly emerging demand for time-critical applications
in the Internet of Things (IoT), such as industrial automation, health
monitoring, and smart surveillance. These applications necessitate
the timely transmission of status updates from various devices to
a central controller via wireless networks \cite{schulzLatencyCriticalIoT2017,wuCriticalInternetThings2020,jiang2020aiassisted}.
Within this context, a newly generated status update holds greater
value for the receiver than the outdated ones. Unfortunately, optimizing
the conventional performance metrics, such as delay and throughput,
does not inherently ensure the freshness of information. To address
this concern, the notion of the age of information (AoI) has been
introduced to capture the timeliness of the status update from the
perspective of the destination \cite{kaulRealtimeStatusHow2012}.
Essentially, the AoI measures the time elapsed since the generation
of the most recently received status update. Henceforth, the AoI becomes
a pivotal performance metric to enhance information freshness in IoT
systems \cite{xuOptimizingInformationFreshness2020,liuUAVAidedDataCollection2021,yangUnderstandingAgeInformation2021,xuOptimalStatusUpdate2021,wangWhenPreprocessKeeping2022,wangAgeChangedInformation2022}.

In an IoT system involving multiple devices, the bandwidth constraint
restricts the simultaneous scheduling of a limited number of devices
to transmit their status updates to the central controller. If a device
lacks transmission opportunities, its AoI will steadily increase.
Even if the device is scheduled to transmit, deep channel fading can
lead to transmission failure, resulting in an increase in AoI as well.
Therefore, how to effectively schedule IoT devices under the bandwidth
constraint to minimize the overall AoI is an overarching challenge.
Recent endeavors have been made to tackle this issue \cite{kadotaSchedulingPoliciesMinimizing2018,kadotaMinimizingAgeInformation2019,sunClosedFormWhittleIndexEnabled2020,hsuSchedulingAlgorithmsMinimizing2020,maatoukOptimalityWhittleIndex2020,SchedulingAlgorithmsforOptimizingAgeofInformationinWirelessNetworkswithThroughputConstraints,zhouMinimumAgeInformation2020a,ceranReinforcementLearningApproach2021a}.
These efforts primarily focus on either the reliable channel scenario,
where transmissions are consistently successful, or the erroneous
channel scenario, where transmission failures are independent and
identically distributed (i.i.d.) across time slots. However, many
practical scenarios involve time-correlated channel fading. Leveraging
this time correlation presents an opportunity to devise a more efficient
scheduling policy that enhances information freshness \cite{panMinimizingAgeInformation2021,yaoAgeOptimalLowPowerStatus2020}.
Nevertheless, these existing studies only consider a single transmitter
and receiver pair, thus leaving the problem of scheduling with multiple
devices unresolved.

In this paper, we consider a time-critical IoT system where multiple
devices generate status update packets and send them to a central
controller via time-correlated wireless channels. In particular, due
to the limited communication bandwidth, only a subset of devices can
be scheduled to send their update packets in each time slot, and hence
the AoI evolutions of the multiple devices intricately interact with
each other.  We model the wireless channel as a Markov chain with
two states, i.e., GOOD and BAD states. Specifically, the transmission
will succeed if the channel is in the GOOD state, and fail otherwise.
Moreover, we assume that the channel states of all the devices are
not known by the scheduler before the transmission. This is a common
and practical scenario in time-critical applications, as acquiring
this information would require additional control-plane functionalities,
leading to significant signaling overhead and increased delay. Consequently,
the channel states of the scheduled devices can only be inferred at
the end of each time slot based on the transmission outcome, while
the states of other devices remain unobservable. In this setup, we
aim to design a timely scheduling policy without the knowledge of
the instantaneous channel state to minimize the total expected time-average
AoI at the destination.{} The central scheduling difficulty
is to balance the urgency of stale information, captured by AoI, and
the transmission opportunity, captured by the channel belief under
partially observable time-correlated channels. In our earlier work
\cite{9238787}, we proposed a scheduling policy to minimize the AoI
under a special case of the Markov channel. This paper extends that
work by considering a more general channel model, providing a more
comprehensive framework for timely scheduling in such systems. The
key contributions of this paper are summarized as follows:

\begin{itemize}
\item We formulate the timely scheduling problem as a partially
observable Restless Multi-Armed Bandit (RMAB) problem since only the
channel states of the scheduled devices can be observed. We relax
the partially observable RMAB problem with the strict constraint and
decouple it into multiple sub-problems, each of which is cast into
a Partially Observable Markov Decision Process (POMDP). By introducing
a belief state, we convert the POMDP to a Markov Decision Process
(MDP), establish the threshold structure of the optimal policy for
each sub-problem, and further prove the indexability of the decoupled
problem.
\item We derive the Whittle's index in closed-form for
a special case and design an algorithm to compute the Whittle's index
in the general case. Based on the resulting indices, we propose a
Whittle-index-based scheduling policy that satisfies the strict bandwidth
constraint. To further reduce the computational complexity, we derive
a closed-form expression for a Whittle-like index and propose a corresponding
Whittle-like index based policy that schedules the devices with the
largest indices.
\item To evaluate the proposed policies, we design an algorithm
to solve the relaxed partially observable RMAB problem under an equality
constraint on the average number of scheduled devices, which provides
a lower bound to the original scheduling problem. We also use the
greedy and myopic policies as low-complexity benchmarks. Through extensive
simulations, we demonstrate that our Whittle's index and Whittle-like
index policies achieve significantly lower time-average AoI than the
baseline policies and remain close to the theoretical lower bound
as the number of devices increases. The results further show that
exploiting both AoI and channel-belief information is particularly
beneficial when scheduling resources are scarce or the network size
is large.
\end{itemize}
The remainder of the paper is organized as follows. Section II summarizes
related work, and Section III presents the system model. Section IV
formulates the partially observable RMAB problem, establishes the
indexability of the decoupled sub-problem, and presents the Whittle's
index policy and the Whittle-like index policy. Section V provides
the technical proofs of the main results. Simulation results are discussed
in Section VI, followed by the conclusion in Section VII.

\section{Related Work}

\subsection{Scheduling for AoI Minimization}

The minimization of average AoI in a single-hop network has been studied
under various settings \cite{kadotaSchedulingPoliciesMinimizing2018,kadotaMinimizingAgeInformation2019,sunClosedFormWhittleIndexEnabled2020,hsuSchedulingAlgorithmsMinimizing2020,maatoukOptimalityWhittleIndex2020,SchedulingAlgorithmsforOptimizingAgeofInformationinWirelessNetworkswithThroughputConstraints,zhouMinimumAgeInformation2020a,ceranReinforcementLearningApproach2021a}.
In \cite{kadotaSchedulingPoliciesMinimizing2018}, the authors considered
periodical packet generations and proposed several low-complexity
scheduling policies to minimize the AoI. With stochastic packet arrivals,
several scheduling policies were proposed in \cite{kadotaMinimizingAgeInformation2019}
to minimize the AoI under three queueing disciplines, i.e., FIFO queue,
one-buffer queue, and no queue. With the same packet arrival model,
a Whittle\textquoteright s index-based random access scheme was proposed
for status updating with a one-buffer queue in \cite{sunClosedFormWhittleIndexEnabled2020}.
In \cite{hsuSchedulingAlgorithmsMinimizing2020}, the age-optimal
scheduling problem was studied in the wireless broadcast network without
buffers at the source.

The above works focus on the case that the arrival of status update
packets follows a certain rule. There is another line of work studying
the age-optimal device scheduling, where the device generates a status
update packet immediately when it is scheduled to transmit. In \cite{maatoukOptimalityWhittleIndex2020},
the authors considered a wireless network where the users generate
status updates at will and send them to a central controller over
i.i.d. unreliable channels. Under the bandwidth constraint, the authors
developed a scheduling policy based on the Whittle\textquoteright s
index approach and established its asymptotic optimality. With the
same channel model, the authors in \cite{SchedulingAlgorithmsforOptimizingAgeofInformationinWirelessNetworkswithThroughputConstraints}
developed several scheduling policies to minimize the average AoI
under the throughput constraint of each user. By considering the non-uniform
packet size of the status updates, the authors in \cite{zhouMinimumAgeInformation2020a}
studied status sampling and device scheduling over unreliable channels.
The transmission scheduling of status updates from a source to multiple
users was investigated in \cite{ceranReinforcementLearningApproach2021a}
by considering both the standard ARQ and the hybrid ARQ protocols.
However, the time correlation in the channel fading process is ignored
in the above works.

\subsection{Markov Channel }

In many practical scenarios, the wireless channel fading exhibits
the characteristics of time correlation, which can be modeled as a
finite-state Markov chain. In \cite{tangMinimizingAgeInformation2020},
the authors considered a finite-state ergodic Markov channel and assumed
that the Channel State Information (CSI) is known to the central controller
before making scheduling decisions. As such, the transmit power can
be controlled in different channel states to guarantee successful
transmission. A scheduling policy was then proposed to minimize the
average AoI under the power consumption constraint. A simplified but
often-used channel model is a two-state Markov chain, also known as
the Gilbert\textendash Elliott channel, where the channel could be
either in a GOOD or BAD state. An energy harvesting transmitter communicating
over a Gilbert\textendash Elliott channel was considered in \cite{salehiheydarabadChannelSensingCommunication2018},
where the transmitter can sense the current channel state before the
transmission at the cost of both an increase in energy consumption
and a reduction in transmission time. The optimal transmission policy
was then designed to maximize the long-term throughput. Under the
same channel model, the status update system with a single device
was studied in \cite{panMinimizingAgeInformation2021,yaoAgeOptimalLowPowerStatus2020}
to minimize the average AoI, where the delayed CSI is always available
regardless of transmission decisions. A more practical case was also
considered in \cite{yaoAgeOptimalLowPowerStatus2020}, where the channel
state is not available to the transmitter when making a decision and
is revealed only after transmission occurs. The transmission scheduling
problem was hence formulated as a POMDP. However, the authors in \cite{yaoAgeOptimalLowPowerStatus2020}
only considered the transmission scheduling of a single user. In \cite{5605371},
multichannel access in cognitive radio systems was considered, where
the occupation of each channel was modeled as a two-state Markov chain,
and the channel state was not known by the transmitter. This problem
was formulated as a partially observable RMAB and a Whittle's index
policy was proposed. Nonetheless, the throughput was maximized, other
than the AoI which evolves with time.

\subsection{Restless Multi-Armed Bandit}

RMAB is a generalization of Multi-Armed Bandit (MAB), where the states
of arms are frozen when they are not pulled. In contrast, the states
of arms in RMAB continuously evolve even if the arms are not pulled.
Gittin's index policy was proposed to solve the MAB problem \cite{GittinsBandit}.
Similarly, Whittle provided a heuristic index policy, called Whittle's
index policy, to the relaxation of the RMAB problem \cite{whittleRestlessBanditsActivity1988}.
Whittle's index policy is asymptotically optimal when the underlying
MDP satisfies indexability. It is also shown that Whittle's index
policy performs empirically well in general, making it a popular method
for RMABs. In classical RMAB problems, the states of all the arms
are completely observable in each time slot, regardless of whether
the arms are pulled or not \cite{whittleRestlessBanditsActivity1988,Restlessbanditspartialconservationlawsandindexability,Nino-Mora:2007aa}.
This model is further generalized to the partially observable RMAB
where only the states of arms that are pulled in the current slot
can be observed \cite{5605371,AsymptoticallyoptimaldownlinkschedulingoverMarkovianfadingchannels}.
In this work, the channel state is revealed to the central controller
only when the corresponding device is scheduled, therefore the timely
scheduling over the Markov channel can be formulated as a partially
observable RMAB problem.

\section{System Model}

We consider an IoT system consisting of $M$ IoT devices and a destination
(e.g., a central controller). The IoT devices sample the underlying
physical processes and send the status update packets to the destination
over time-correlated wireless channels. Time is considered to be slotted
and the duration of each time slot is normalized to unity. In each
time slot, a subset of the IoT devices is scheduled to update their
status information to the destination. For each IoT device, we let
$u_{i}(t)\in\{0,1$\} denote the scheduling decision in slot $t$,
where $u_{i}(t)=1$ indicates that the device $i$ is scheduled to
transmit the update to the destination, and $u_{i}(t)=0$, otherwise.
The scheduling decision of the system in slot $t$ is then denoted
by $\bm{U}(t)=[u_{1}(t),u_{2}(t),...,u_{M}(t)]$. We assume that each
IoT device can generate status updates at will. In other words, if
the IoT device is scheduled to update its status, it will generate
a fresh status update at the beginning of the slot. We also assume
that each IoT device takes one time slot to transmit a status update
packet to the destination. Due to the scarcity of the available bandwidth,
we suppose that at most $K$ ($K<M$) users can
transmit simultaneously in a time slot. Therefore, we have the constraint
that $\sum^{M}_{i=1}u_{i}(t)\leq K$ for any $t$.

The wireless channel between each IoT device and the destination is
modeled as a Gilbert\textendash Elliott channel as shown in Fig. \ref{fig:ChannelModel}.
The channel state is assumed to be independent across different channels.
Let $h_{i}(t)\in\{0,1\}$ denote the state of the channel between
device $i$ and the destination in slot $t$. Particularly, if the
channel of device $i$ is in the GOOD state in slot $t$, denoted
by $h_{i}(t)=1$, the status update packet will be successfully received
by the destination; while if the channel of device $i$ is in the
BAD state in slot $t$, denoted by $h_{i}(t)=0$, the transmission
will be failed. If the status update of device $i$ is not successfully
received, it will be dropped and a new status update will be generated
when device $i$ is scheduled the next time. We assume that the duration
of the channel state is equal to the slot length and the transition
occurs at the beginning of each slot.\footnote{In practical IoT implementations, the slot duration
should be chosen no longer than the channel coherence time, or sufficiently
small so that the channel state can be regarded as approximately constant
during packet transmission and feedback.} The transition probabilities are given by $\Pr[h_{i}(t+1)=1|h_{i}(t)=1]=\alpha_{i}$
and $\Pr[h_{i}(t+1)=0|h_{i}(t)=0]=\beta_{i}$, where $0<\alpha_{i}<1$
and $0<\beta_{i}<1$.{} The channel transition probabilities
are assumed to be stationary over the considered time horizon and
available to the scheduler, which can be obtained from historical
measurements or offline estimation. Similar to \cite{yaoAgeOptimalLowPowerStatus2020}
and \cite{salehiheydarabadChannelSensingCommunication2018}, we assume
positively correlated channels with $\alpha_{i}\geq1-\beta_{i}$ for
any IoT devices. We consider the case that the channel states in each
time slot $[h_{1}(t),h_{2}(t),\ldots,h_{M}(t)]\in\{0,1\}^{M}$ are
not known by the destination before the scheduling decision is made.
However, the channel states of a subset of devices can be observed
at the end of each time slot. Let $o_{i}(t)$ denote the observation
of channel state of device $i$ in slot $t$. For each device that
is scheduled to transmit in slot $t$, the destination observes a
successful transmission, denoted by $o_{i}(t)=1$, if the channel
is in the GOOD state; otherwise, the destination observes a transmission
failure, denoted by $o_{i}(t)=0$. For other devices that are not
scheduled in slot $t$, the observations at the destination are set
to be \emph{none}, i.e., $o_{j}(t)=\text{\emph{none}}$.
\begin{figure}[t]
\centering

\subfloat[Illustration of a typical Internet of Things (IoT) network.]{\centering

\includegraphics[width=0.9\columnwidth]{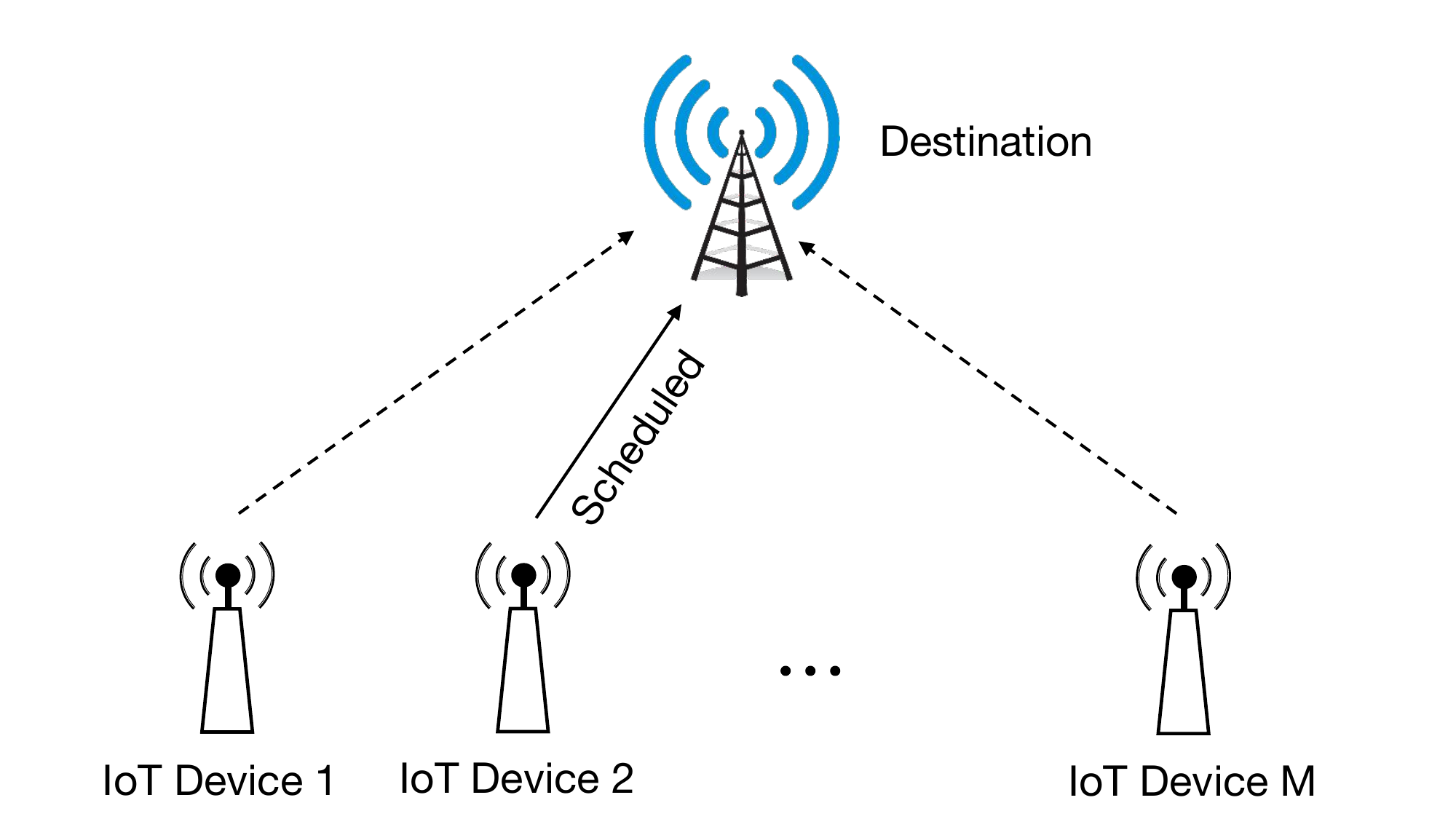}

}

\subfloat[\label{fig:ChannelModel}The Gilbert\textendash Elliott channel model.]{\centering

\includegraphics[width=0.35\textwidth]{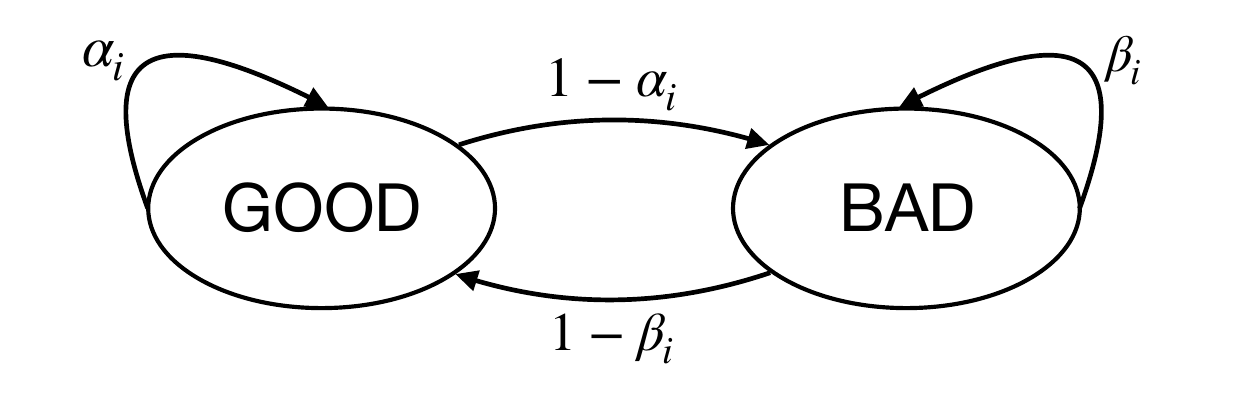}

}

\caption{Illustration of system model.}
\end{figure}

The freshness of the information is measured by the AoI, which is
defined as the time elapsed since the generation of the most recent
status update received by the destination. Let $\delta_{i}(t)$ be
the AoI of device $i$ at the beginning of slot $t$ and $\bm{\Delta}(t)=[\delta_{1}(t),\delta_{2}(t),...,\delta_{M}(t)]$
be the system AoI in slot $t$. If the update of the scheduled device
$i$ is successfully received by the destination, the AoI $\delta_{i}(t)$
will decrease to 1 (i.e., the transmission time of the delivered update);
otherwise the AoI will increase by 1. Therefore, the dynamics of AoI
$\delta_{i}(t)$ are expressed as follows:
\begin{equation}
\delta_{i}(t+1)=\begin{cases}
1, & u_{i}(t)=1,h_{i}(t)=1,\\
\delta_{i}(t)+1, & u_{i}(t)=1,h_{i}(t)=0,\\
\delta_{i}(t)+1, & u_{i}(t)=0.
\end{cases}\label{eq:AoI-update}
\end{equation}

A scheduling policy $\pi$ is defined as a sequence of actions, i.e.,
$\pi=[U(1),U(2),\ldots]$. We aim at designing the scheduling policy
so as to minimize the total expected time-average AoI at the destination.
By letting $\Pi$ be the set of all causal scheduling policies, i.e.,
scheduling decisions are made without future information, our scheduling
problem can be formulated as follows:
\begin{align}
\min_{\pi\in\Pi} & \limsup_{T\rightarrow\infty}\frac{1}{T}\mathbb{E}_{\pi}\left[\sum^{M}_{i=1}\sum^{T}_{t=1}\delta^{\pi}_{i}(t)\right],\\
\text{s.t.} & \enskip\sum^{M}_{i=1}u_{i}(t)\leq K,\forall t,
\end{align}
where $T$ is the length of time horizon and $\delta^{\pi}_{i}(t)$
is the AoI of device $i$ in slot $t$ under policy $\pi$.

It is noteworthy that each device in this problem can be considered
a restless bandit since the channel states and the AoI of all devices
evolve no matter whether the device is scheduled or not. However,
the channel state cannot be observed by the destination unless the
associated device is scheduled. Therefore, the above scheduling problem
can be formulated as a partially observable RMAB problem, which is
in general PSPACE-hard and difficult to find the optimal solution
\cite{doi:10.1287/moor.24.2.293}. To circumvent this difficulty,
we employ the Whittle\textquoteright s index policy, which has low
computational complexity and near-optimal performance \cite{whittleRestlessBanditsActivity1988}.
We then present the Whittle's index policy in detail in the following
section.

\section{Whittle's Index Policy}

In this section, we first relax the strict constraint in each time
slot and decouple the relaxed problem into multiple sub-problems via
a Lagrangian approach. Each sub-problem is formulated as a POMDP problem
and then reformulated as an MDP problem by introducing a belief state.
We show that the optimal policy for each sub-problem has an appealing
threshold structure, and prove the indexability property, which confirms
the existence of the Whittle's index. After that, we propose an algorithm
to compute the Whittle's index, based on which the scheduling policy
is presented.{} To further reduce the computational
complexity, we also derive a closed-form Whittle-like index and present
the corresponding Whittle-like index policy. We also study the optimal
policy for the relaxed partially observable RMAB problem, which serves
as a performance lower bound, and design a low-complexity myopic policy
as a benchmark. 

\subsection{Decoupled Sub-problem}

In the original scheduling problem, at most $K$ devices are chosen
in each time slot. We first relax this instantaneous constraint to
a time-average constraint, and obtain the relaxed problem as follows:
\begin{align}
\min_{\pi\in\Pi} & \limsup_{T\rightarrow\infty}\frac{1}{T}\mathbb{E}_{\pi}\left[\sum^{M}_{i=1}\sum^{T}_{t=1}\delta^{\pi}_{i}(t)\right],\\
\text{s.t.} & \enskip\frac{1}{T}\sum^{M}_{i=1}\sum^{T}_{t=0}\mathbb{E}_{\pi}[u_{i}(t)]\leq K.
\end{align}
Next, we employ a Lagrangian approach to transform the relaxed problem
into an unconstrained one. Let $W$ denote the Lagrangian parameter.
For any given $W\geq0$, the Lagrangian function of the relaxed problem
is given by 
\begin{equation}
\limsup_{T\rightarrow\infty}\frac{1}{T}\mathbb{E}_{\pi}\left[\sum^{M}_{i=1}\sum^{T}_{t=1}\left(\delta^{\pi}_{i}(t)+Wu_{i}(t)\right)\right]-WK.
\end{equation}
Since the last term $WK$ is irrespective of the policy $\pi$, the
dual problem can be expressed as 
\begin{equation}
\min_{\pi\in\Pi}\limsup_{T\rightarrow\infty}\frac{1}{T}\mathbb{E}_{\pi}\left[\sum^{M}_{i=1}\sum^{T}_{t=1}\left(\delta^{\pi}_{i}(t)+Wu_{i}(t)\right)\right].
\end{equation}
Then, we decompose the dual problem into $M$ sub-problems which can
be solved independently. By dropping the device index, the sub-problem
can be formulated as follows:
\begin{equation}
\min_{\pi\text{\ensuremath{\in}}\Pi}\limsup_{T\rightarrow\infty}\frac{1}{T}\mathbb{E}_{\pi}\left[\sum^{T}_{t=1}(\delta^{\pi}(t)+Wu(t))\right],
\end{equation}
where $W$ can be regarded as an additional cost to schedule the device.

In each sub-problem, an action is chosen at the beginning of each
time slot. If the IoT device is scheduled to transmit the status update,
i.e., $u(t)=1$, the destination is able to observe the state of the
channel. Otherwise, the channel state cannot be observed. No matter
what action is taken, the AoI will be updated according to (\ref{eq:AoI-update})
at the end of the time slot. Since the channel state is not directly
observable at the beginning of each time slot, each sub-problem can
be cast into a POMDP problem. To address this problem, we introduce
a belief state $\theta(t)$ which summaries the knowledge of the channel
state at the beginning of slot $t$ based on all past actions and
observations. In particular, the belief state is defined as the conditional
probability that the channel is in a GOOD state at the beginning of
a slot given the action and observation history. With action $u(t)$
and observation $o(t)$ at time slot $t$, the value of the belief
state is updated as follows:
\begin{equation}
\theta(t+1)=\begin{cases}
\alpha, & u(t)=1,o(t)=1,\\
1-\beta, & u(t)=1,o(t)=0,\\
\mathcal{T}(\theta(t)), & u(t)=0,
\end{cases}\label{eq:belief-update}
\end{equation}
where $\mathcal{T}(\theta(t))\triangleq\theta(t)\alpha+(1-\theta(t))(1-\beta)$
is the one-step belief state evolution when the channel state cannot
be observed. We also denote by $\mathcal{T}^{k}(\theta(t))\triangleq\mathcal{T}(\mathcal{T}^{k-1}(\theta(t)))$
the $k$-step belief state evolution when the channel state is unobserved
for $k$ consecutive slots. Additionally, we let $\mathcal{T}^{0}(\theta)=\theta$.
It has been shown in  \cite{10.2307/168926} that the belief state
is a sufficient statistic to describe the knowledge of the underlying
channel state for any slot $t$. By introducing the belief state,
we can now reformulate the POMDP as an infinite horizon average cost
MDP, whose components are described in the following.

\textbf{States}: The state of the MDP is defined as a tuple $\bm{s}(t)\triangleq(\delta(t),\theta(t))$,
where $\delta(t)\in\mathcal{A}\triangleq\{1,2,\ldots\}$ is the AoI
and $\theta(t)$ is the belief. Since the belief is either $\alpha$
or $1-\beta$ after every transmission, it evolves as $\mathcal{T}^{k}(\alpha)$
or $\mathcal{T}^{k}(1-\beta)$ before the next transmission, where
$0\leq k\leq\delta-1$. Hence, given the AoI $\delta$, the belief
state belongs to a finite set  $\mathcal{B}_{\delta}\triangleq\{\theta:\theta=\mathcal{T}^{\delta-1}(\alpha)\text{ or }\mathcal{T}^{k}(1-\beta),0\leq k\leq\delta-2\}$.
Altogether, the state space, which is given by $\mathcal{S}\triangleq\{(\delta,\theta):\delta\in\mathcal{A},\theta\in\mathcal{B}_{\delta}\}$,
is countably infinite. In the following, we will use $\bm{s}$ and
$(\delta,\theta)$ to represent the state interchangeably.

\textbf{Actions}: The action in slot $t$ is $u(t)$, which specifies
whether the IoT device is scheduled to update or not. The action space
is finite.

\textbf{Transition probabilities}: We let $\Pr[\bm{s}(t+1)\mid\bm{s}(t),u(t)]$
denote the probability that state transits from $\bm{s}(t)$ to $\bm{s}(t+1)$
in the next slot by taking action $u(t)$ in slot $t$. Specifically,
given the current state $\bm{s}(t)=(\delta,\theta)$, we have 
\begin{equation}
\left\{ \begin{aligned}\Pr[\bm{s}(t+1) & =(1,\alpha)\mid\bm{s}(t),u(t)=1]=\theta,\\
\Pr[\bm{s}(t+1) & =(\delta+1,1-\beta)\mid\bm{s}(t),u(t)=1]=1-\theta,\\
\Pr[\bm{s}(t+1) & =(\delta+1,\mathcal{T}(\theta))\mid\bm{s}(t),u(t)=0]=1,
\end{aligned}
\right.\label{eqMDPTrans}
\end{equation}
and $\Pr[\bm{s}(t+1)\mid\bm{s}(t),u(t)]=0$ otherwise. The dynamics
of AoI and belief are shown in Fig. \ref{fig:a_simple_path_of_AoI=000026belief}.
\begin{figure}[t]
\vspace{-2em}

\centering

\includegraphics[width=0.5\textwidth]{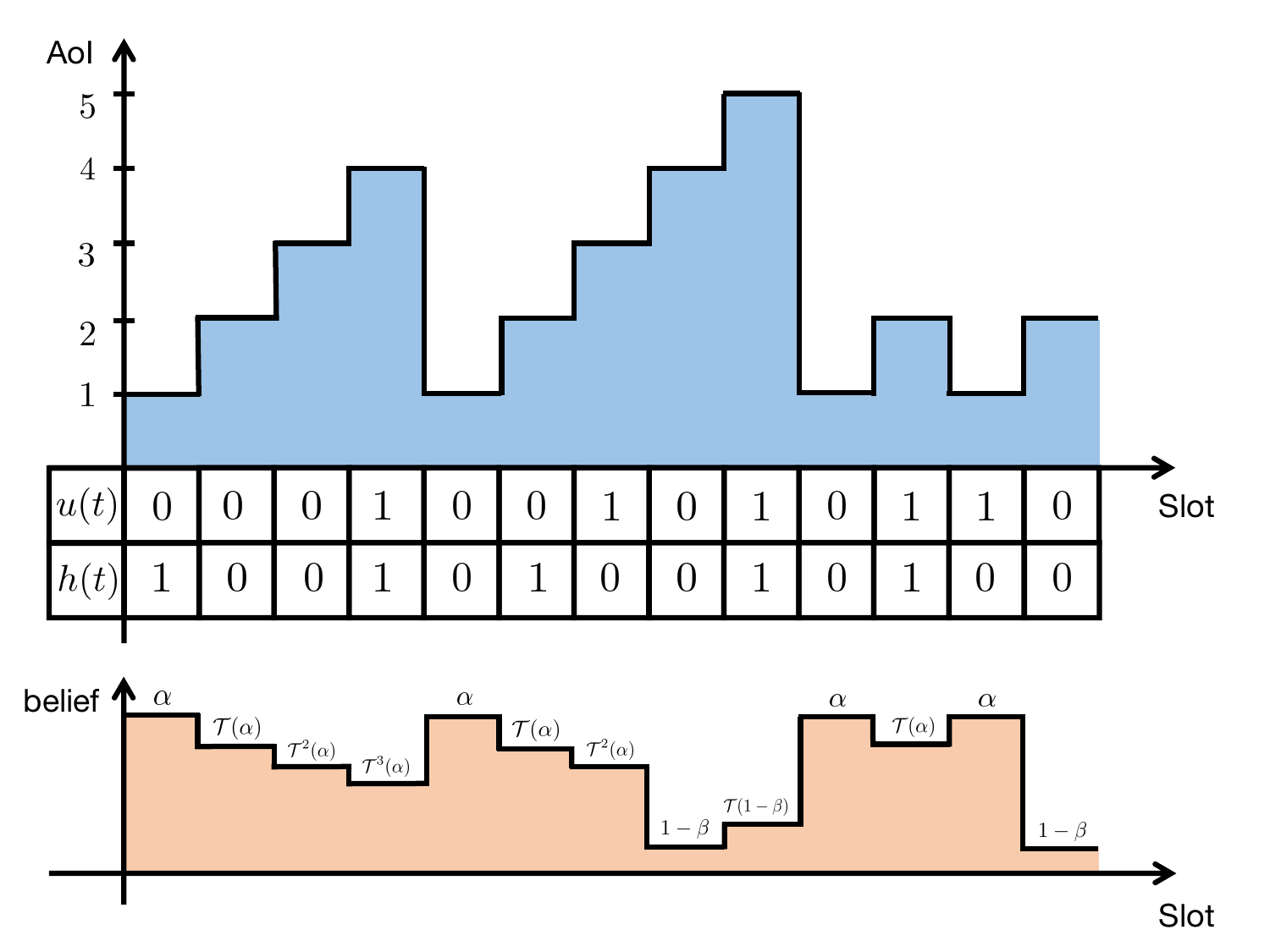}\vspace{-1em}

\caption{An illustration of the dynamics of AoI and belief in a time-slotted
IoT system.\label{fig:a_simple_path_of_AoI=000026belief}}
\vspace{-1.5em}
\end{figure}

\textbf{Cost}: Let $C(\bm{s}(t),u(t))$ be the immediate cost at state
$\bm{s}(t)$ with action $u(t)$, which consists of the AoI in the
current time slot and the extra cost of being scheduled. Specifically,
\begin{equation}
C(\bm{s}(t),u(t))=\delta(t)+Wu(t).
\end{equation}

Given an initial state $\bm{s}$, the average cost under a scheduling
policy $\pi$ can be expressed as:
\begin{equation}
V_{\pi}(\bm{s})=\limsup_{T\rightarrow\infty}\frac{1}{T}\mathbb{E}_{\pi}\left[\sum^{T}_{t=1}C(\bm{s}(t),u(t))\mid\bm{s}\right].
\end{equation}
The objective is to find a scheduling policy $\pi\in\Pi$ that minimizes
the average cost. In particular, a scheduling policy is average cost
optimal if it solves the following problem
\begin{equation}
\min_{\pi\in\Pi}V_{\pi}(\bm{s}),\label{eq:MDP}
\end{equation}
and the optimal scheduling policy is denoted by $\pi^{*}$. 

\subsection{Structure of Average Cost Optimal Policy}

A policy of the MDP is said to be stationary deterministic if it is
time-invariant and makes decisions with certainty. In this work,
we target at studying stationary deterministic policies which are
the simplest to be implemented. However, since the formulated infinite
horizon average cost MDP has a countably infinite state space and
unbounded costs, a stationary deterministic average cost optimal policy
may not exist. Therefore before we proceed further, we prove the existence
of a stationary deterministic average cost optimal policy for the
formulated MDP. To this end, we begin with presenting an infinite
horizon discounted cost MDP, and then we connect it to the average
cost MDP. Given an initial state $\bm{s}$, the expected total $\mu$-discounted
cost under a scheduling policy $\pi$ is given by
\begin{equation}
V_{\mu,\pi}(\bm{s})=\limsup_{T\rightarrow\infty}\mathbb{E}_{\pi}\left[\sum^{T}_{t=1}\mu^{t-1}C(\bm{s}(t),u(t))\mid\bm{s}\right],
\end{equation}
where $\mu\in(0,1)$ is a discount factor. A scheduling policy is
$\mu$-discounted cost optimal if it minimizes the expected total
$\mu$-discounted cost $V_{\mu,\pi}(\bm{s})$. Let $V_{\mu}(\bm{s})=\min_{\pi}V_{\mu,\pi}(\bm{s})$
denote the optimal expected total $\mu$-discounted cost. It is easy
to verify that there exists a stationary deterministic policy $f$
(i.e., a stationary deterministic policy that chooses $u(t)=0$ for
all time slots) such that $V_{\mu,f}(\bm{s})<\infty$ for every $\bm{s}$
and $\mu$. Then, according to \cite{sennottAverageCostOptimal1989},
we have the discounted cost optimality equation of $V_{\mu}(\bm{s})$
in the following proposition.
\begin{prop}
\label{prop:discounted cost}(a) For an initial state $\bm{s}$, the
optimal expected total $\mu$-discounted cost $V_{\mu}(\bm{s})$ satisfies
the following discounted cost optimality equation:
\begin{equation}
V_{\mu}(\bm{s})=\min_{u\in\{0,1\}}Q_{\mu}(\bm{s},u),\label{eq:discount Bellman function}
\end{equation}
where $Q_{\mu}(\bm{s},u)=C(\bm{s},u)+\mu\sum_{\bm{s}'\in\mathcal{S}}\Pr\left[\bm{s}'\mid\bm{s},u\right]V_{\mu}(\bm{s}')$
is the state-action value function (i.e., Q-function). The stationary
deterministic policy determined by the right-hand-side of the discounted
cost optimality equation in Eq. (\ref{eq:discount Bellman function})
is $\mu$-discounted cost optimal.

(b) The optimal expected total $\mu$-discounted cost $V_{\mu}(\bm{s})$
can be obtained through a value iteration algorithm. Let $V_{\mu,0}(\bm{s})=0$
for all $\bm{s}\in\mathcal{S}$ and for any $n\geq0$,
\begin{equation}
V_{\mu,n+1}(\bm{s})=\min_{u\in\{0,1\}}Q_{\mu,n+1}(\bm{s},u),\label{eq:Bellman in VIA}
\end{equation}
where
\begin{equation}
Q_{\mu,n+1}(\bm{s},u)=C(\bm{s},u)+\mu\sum_{\bm{s}'\in\mathcal{S}}\Pr\left[\bm{s}'\mid\bm{s},u\right]V_{\mu,n}(\bm{s}').
\end{equation}
Then, we have $V_{\mu,n}(\bm{s})\rightarrow V_{\mu}(\bm{s})$ as $n\rightarrow\infty$,
for every $\bm{s}$ and $\mu$.
\end{prop}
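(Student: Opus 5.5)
This proposition follows from Sennott's results for discounted MDPs with countable state spaces and unbounded nonnegative costs \cite{sennottAverageCostOptimal1989}. The plan is therefore to check the hypotheses of those results for the MDP constructed above and then invoke them. There are three. First, the state space $\mathcal{S}$ is countable and the action set $\{0,1\}$ is finite. Second, the one-stage cost $C(\bm{s},u)=\delta+Wu$ is nonnegative, since $\delta\geq1$ and $W\geq0$. Third, some stationary policy $f$ has $V_{\mu,f}(\bm{s})<\infty$ for every $\bm{s}$ and every $\mu\in(0,1)$.

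For the third condition, take $f$ to be the policy that never schedules, $u(t)=0$ for all $t$. Starting from $(\delta,\theta)$, the AoI evolves deterministically as $\delta+t-1$ in slot $t$. Hence
\begin{equation*}
V_{\mu,f}(\bm{s})=\sum_{t=1}^{\infty}\mu^{t-1}(\delta+t-1)=\frac{\delta}{1-\mu}+\frac{\mu}{(1-\mu)^{2}}<\infty .
\end{equation*}
This also gives the uniform bound $V_{\mu}(\bm{s})\leq V_{\mu,f}(\bm{s})$. As a consequence, every series $\sum_{\bm{s}'}\Pr[\bm{s}'\mid\bm{s},u]V_{\mu}(\bm{s}')$ is finite: from any state at most two successor states have positive probability, namely $(1,\alpha)$ and $(\delta+1,1-\beta)$ under $u=1$, and $(\delta+1,\mathcal{T}(\theta))$ under $u=0$.

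Part (a) then follows from the standard argument. Conditioning on the first transition gives $V_{\mu,\pi}(\bm{s})=C(\bm{s},u(1))+\mu\,\mathbb{E}[V_{\mu,\pi'}(\bm{s}(2))]$, where $\pi'$ is the continuation of $\pi$. Minimizing over $\pi$ yields $V_{\mu}\geq\min_{u}Q_{\mu}(\bm{s},u)$, and concatenating $\epsilon$-optimal continuations yields the reverse inequality. Optimality of the greedy stationary policy $g$, which picks the minimizing $u$ in (\ref{eq:discount Bellman function}), comes from iterating the equation $N$ times. This gives
\begin{equation*}
V_{\mu}(\bm{s})=\mathbb{E}_{g}\Big[\sum_{t=1}^{N}\mu^{t-1}C(\bm{s}(t),u(t))\Big]+\mu^{N}\mathbb{E}_{g}[V_{\mu}(\bm{s}(N+1))]\geq\mathbb{E}_{g}\Big[\sum_{t=1}^{N}\mu^{t-1}C(\bm{s}(t),u(t))\Big],
\end{equation*}
using $V_{\mu}\geq0$. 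Letting $N\to\infty$ gives $V_{\mu,g}\leq V_{\mu}$, so $g$ is optimal.

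For part (b), induction on $n$ shows that $V_{\mu,n}$ is nondecreasing in $n$ and bounded above by $V_{\mu}$, because all costs are nonnegative and $V_{\mu,0}=0$. So $V_{\mu,n}\uparrow L\leq V_{\mu}$ pointwise. Each state has only finitely many successors and the action set is finite, so we may pass to the limit inside the $\min$ and the sum. Thus $L$ satisfies $L=\min_{u}\{C+\mu\sum\Pr\,L\}$. The same iteration argument as in (a), applied to the policy greedy with respect to $L$, gives $V_{\mu}\leq L$, and hence $V_{\mu,n}\to V_{\mu}$.

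The main obstacle is the usual one for countable state spaces with unbounded costs: the Bellman operator is not a sup-norm contraction, so uniqueness and convergence cannot be read off from Banach's fixed-point theorem. The plan avoids this by using monotonicity from nonnegative costs together with the finite-policy bound from $f$. Rather than reproving these general facts, I would cite \cite{sennottAverageCostOptimal1989} once the hypotheses above are verified.
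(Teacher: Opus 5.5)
Your proposal is correct and takes the same route as the paper. The paper notes that the never-schedule policy $f$ has $V_{\mu,f}(\bm{s})<\infty$ for every $\bm{s}$ and $\mu$ and then cites \cite{sennottAverageCostOptimal1989}; your computation $V_{\mu,f}(\bm{s})=\frac{\delta}{1-\mu}+\frac{\mu}{(1-\mu)^{2}}$ and your monotone value-iteration argument (nonnegative costs, finite action set, at most two successor states) fill in what the paper leaves to that citation.
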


By using the value iteration in Proposition \ref{prop:discounted cost},
we present properties of $V_{\mu}(\bm{s})$ in the following lemmas.
\begin{lem}
\label{lem:Lemma1:discount_monitonicity}For any given $\theta$,
$V_{\mu}(\delta,\theta)$ is a non-decreasing function in $\delta$.
\end{lem}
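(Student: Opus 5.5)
We prove the claim by induction on the value iteration index $n$ in Proposition \ref{prop:discounted cost}(b). Since $V_{\mu,n}(\bm{s})\rightarrow V_{\mu}(\bm{s})$ for every $\bm{s}$, and limits preserve weak inequalities, it suffices to show the following: for every $n\geq0$, every $\theta$, and every $\delta_{1}\leq\delta_{2}$ for which $\theta$ is admissible at both AoI values,
\[
V_{\mu,n}(\delta_{1},\theta)\leq V_{\mu,n}(\delta_{2},\theta).
\]
In fact we can treat $V_{\mu,n}(\delta,\theta)$ as defined for any pair $(\delta,\theta)$ with $\delta\geq1$ and $\theta\in[0,1]$. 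The recursion in (\ref{eqMDPTrans}) makes sense for arbitrary such pairs, so we may prove the monotonicity on this enlarged domain. This avoids any bookkeeping about whether $\theta\in\mathcal{B}_{\delta}$.

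The base case $n=0$ is trivial, since $V_{\mu,0}\equiv0$.

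For the inductive step, assume $V_{\mu,n}(\cdot,\theta)$ is non-decreasing for every $\theta$, and compare the two Q-functions at $(\delta_{1},\theta)$ and $(\delta_{2},\theta)$.

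\emph{Action $u=0$.} We have
\[
Q_{\mu,n+1}(\delta,\theta,0)=\delta+\mu V_{\mu,n}(\delta+1,\mathcal{T}(\theta)).
\]
The first term is increasing in $\delta$. The second is non-decreasing in $\delta$ by the induction hypothesis, applied at the fixed belief $\mathcal{T}(\theta)$.

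\emph{Action $u=1$.} We have
\[
Q_{\mu,n+1}(\delta,\theta,1)=\delta+W+\mu\theta V_{\mu,n}(1,\alpha)+\mu(1-\theta)V_{\mu,n}(\delta+1,1-\beta).
\]
The success term does not depend on $\delta$. The failure term is non-decreasing in $\delta$ by the induction hypothesis at belief $1-\beta$, and its coefficient $\mu(1-\theta)$ is nonnegative.

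Hence each $Q_{\mu,n+1}(\cdot,\theta,u)$ is non-decreasing in $\delta$. A pointwise minimum of non-decreasing functions is non-decreasing, so $V_{\mu,n+1}(\cdot,\theta)$ is non-decreasing, which completes the induction. Passing to the limit $n\rightarrow\infty$ gives the lemma.

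The only delicate point is the domain issue. For a given $\theta$, the states $(\delta_{1},\theta)$ and $(\delta_{2},\theta)$ may not both lie in $\mathcal{S}$. Moreover, the next-state beliefs $\mathcal{T}(\theta)$ and $1-\beta$ must be compared at two different AoI values, and these pairs too may fall outside $\mathcal{S}$. Working on the enlarged domain handles this. The value iteration formulas, and Proposition \ref{prop:discounted cost}, apply verbatim to the MDP with state space $\{1,2,\ldots\}\times[0,1]$, and they restrict to the original $\mathcal{S}$. The existence of a finite-cost policy (always idle) carries over unchanged to the enlarged state space. I expect no other obstacle, since the cost is linear in $\delta$ and the transitions shift $\delta$ uniformly.
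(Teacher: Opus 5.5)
Your proof is correct and follows the same route as the paper: induction on the value-iteration index, a term-by-term comparison of $Q_{\mu,n+1}(\cdot,\theta,0)$ and $Q_{\mu,n+1}(\cdot,\theta,1)$, the fact that the minimum preserves the order, and passage to the limit via Proposition~\ref{prop:discounted cost}(b). Your extension to the domain $\{1,2,\ldots\}\times[0,1]$ is a useful refinement the paper leaves implicit: the paper later relies on it, for instance in the comparison $V_{\mu,n}(1,\alpha)\leq V_{\mu,n}(\delta+1,\alpha)$ in the proof of Lemma~\ref{lem:lemma2:concavity}. For the statement itself, though, the induction already closes on $\mathcal{S}$, since $\theta\in\mathcal{B}_{\delta}$ implies both $\mathcal{T}(\theta)\in\mathcal{B}_{\delta+1}$ and $1-\beta\in\mathcal{B}_{\delta+1}$.
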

\begin{IEEEproof}
Let $\bm{s}_{1}=(\delta_{1},\theta)$ and
$\bm{s}_{2}=(\delta_{2},\theta)$ with $\delta_{1}\leq\delta_{2}$.
We prove by induction over the value-iteration index $n$ that $V_{\mu,n}(\bm{s}_{1})\leq V_{\mu,n}(\bm{s}_{2})$.
The claim holds for $n=0$ since $V_{\mu,0}(\bm{s})=0$. Suppose it
holds at iteration $n$. For action $u=0$, the Bellman update gives
\begin{align}
 & Q_{\mu,n+1}(\bm{s}_{1},0)\nonumber \\
 & =\delta_{1}+\mu V_{\mu,n}(\delta_{1}+1,\mathcal{T}(\theta))\nonumber \\
 & \leq\delta_{2}+\mu V_{\mu,n}(\delta_{2}+1,\mathcal{T}(\theta))=Q_{\mu,n+1}(\bm{s}_{2},0).
\end{align}
For action $u=1$, we similarly have
\begin{align}
 & Q_{\mu,n+1}(\bm{s}_{1},1)\nonumber \\
 & =\delta_{1}+W+\mu\theta V_{\mu,n}(1,\alpha)\nonumber \\
 & \quad+\mu(1-\theta)V_{\mu,n}(\delta_{1}+1,1-\beta)\nonumber \\
 & \leq\delta_{2}+W+\mu\theta V_{\mu,n}(1,\alpha)\nonumber \\
 & \quad+\mu(1-\theta)V_{\mu,n}(\delta_{2}+1,1-\beta)=Q_{\mu,n+1}(\bm{s}_{2},1).
\end{align}
Taking the minimum over $u\in\{0,1\}$ preserves the order, so $V_{\mu,n+1}(\bm{s}_{1})\leq V_{\mu,n+1}(\bm{s}_{2})$.
Letting $n\rightarrow\infty$ proves that $V_{\mu}(\delta,\theta)$
is non-decreasing in $\delta$.
\end{IEEEproof}
\begin{lem}
\label{lem:lemma2:concavity} For any given $\delta$, $V_{\mu}(\delta,\theta)$
is a non-increasing and concave function with respect to $\theta$.
\end{lem}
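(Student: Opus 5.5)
The plan is to argue by induction on the value-iteration index $n$ from Proposition \ref{prop:discounted cost}(b), exactly as in the proof of Lemma \ref{lem:Lemma1:discount_monitonicity}, and then let $n\rightarrow\infty$. Concavity only makes sense on an interval, so we first extend the MDP so that the belief ranges over all of $[0,1]$. That is, we consider states $(\delta,\theta)$ with $\delta\in\mathcal{A}$ and $\theta\in[0,1]$, with the same transitions (\ref{eqMDPTrans}) and cost. The original $V_{\mu}$ is the restriction of the extended one to $\mathcal{S}$, because $\mathcal{S}$ is closed under the dynamics. The argument of Lemma \ref{lem:Lemma1:discount_monitonicity} never used that $\theta\in\mathcal{B}_{\delta}$, so it applies verbatim to the extended model. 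Hence at every iteration $n$ we also have that $V_{\mu,n}(\delta,\theta)$ is non-decreasing in $\delta$ for every fixed $\theta\in[0,1]$.

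The inductive claim is that, for every $\delta$, $V_{\mu,n}(\delta,\cdot)$ is non-increasing and concave on $[0,1]$. This is trivial for $n=0$. Assume it holds at iteration $n$, and treat the two Q-functions separately.

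\textbf{Idle action.} For $u=0$ we have $Q_{\mu,n+1}(\bm{s},0)=\delta+\mu V_{\mu,n}(\delta+1,\mathcal{T}(\theta))$. Since $\mathcal{T}(\theta)=(1-\beta)+\theta(\alpha+\beta-1)$ is affine and non-decreasing in $\theta$ (by positive correlation, $\alpha+\beta-1\geq0$), composing it with a non-increasing concave function gives a non-increasing concave function of $\theta$.

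\textbf{Scheduling action.} For $u=1$ we have
\begin{equation*}
Q_{\mu,n+1}(\bm{s},1)=\delta+W+\mu(1-\beta)\cdot0+\mu\left[\theta V_{\mu,n}(1,\alpha)+(1-\theta)V_{\mu,n}(\delta+1,1-\beta)\right],
\end{equation*}
which is affine in $\theta$ and hence concave. It is non-increasing iff $V_{\mu,n}(1,\alpha)\leq V_{\mu,n}(\delta+1,1-\beta)$. This follows by chaining the two inductive properties: since $\alpha\geq1-\beta$, monotonicity in $\theta$ gives $V_{\mu,n}(1,\alpha)\leq V_{\mu,n}(1,1-\beta)$, and monotonicity in $\delta$ gives $V_{\mu,n}(1,1-\beta)\leq V_{\mu,n}(\delta+1,1-\beta)$.

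\textbf{Closing the induction.} A pointwise minimum of two non-increasing concave functions is again non-increasing and concave, so $V_{\mu,n+1}(\delta,\cdot)$ inherits both properties.

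Finally, by Proposition \ref{prop:discounted cost}(b), $V_{\mu,n}\rightarrow V_{\mu}$ pointwise. Pointwise limits preserve both the inequalities $V(\theta_{1})\geq V(\theta_{2})$ for $\theta_{1}\leq\theta_{2}$ and the concavity inequalities, so $V_{\mu}(\delta,\cdot)$ is non-increasing and concave. Restricting to $\mathcal{B}_{\delta}$ gives the statement.

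The main obstacle is the monotonicity of the affine $u=1$ branch. It is the only place where the two state coordinates interact, and it requires the positive-correlation assumption $\alpha\geq1-\beta$ together with Lemma \ref{lem:Lemma1:discount_monitonicity} holding at each finite iteration on the extended belief domain, not merely in the limit. A secondary point to check is that Proposition \ref{prop:discounted cost} (value-iteration convergence) still holds on the extended, uncountable belief space. I would handle this by noting that from any fixed initial $(\delta,\theta)$ the reachable set is countable, so Sennott's conditions apply initial-state by initial-state.
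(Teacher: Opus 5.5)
Your proof is correct and follows essentially the same route as the paper: a value-iteration induction in which the idle Q-function inherits monotonicity and concavity through the non-decreasing affine map $\mathcal{T}$, and the active Q-function is affine in $\theta$ with slope $\mu[V_{\mu,n}(1,\alpha)-V_{\mu,n}(\delta+1,1-\beta)]\le 0$ (using Lemma~\ref{lem:Lemma1:discount_monitonicity} together with the induction hypothesis), after which taking the minimum and passing to the limit preserve both properties. The differences are cosmetic---you run one joint induction instead of two, chain the slope inequality through $(1,1-\beta)$ rather than $(\delta+1,\alpha)$, and make explicit the extension of the belief to $[0,1]$, which the paper leaves implicit and which is genuinely needed because these comparison states generally lie outside $\mathcal{S}$.
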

\begin{IEEEproof}
The proof uses the same value-iteration induction
as Lemma \ref{lem:Lemma1:discount_monitonicity}, but applies it to
the belief dimension. The key steps are to show that the Q-function
of idle action preserves monotonicity through $\mathcal{T}(\theta)$,
while the Q-function of active action is affine in $\theta$. These
two properties establish both non-increasingness and concavity of
$V_{\mu}(\delta,\theta)$ with respect to $\theta$. The detailed
proof is provided in Section \ref{subsec:Proof-of-Lemma2}.
\end{IEEEproof}
Based on the properties of $V_{\mu}(\bm{s})$ in Lemmas \ref{lem:Lemma1:discount_monitonicity}
and \ref{lem:lemma2:concavity}, we show that the MDP in (\ref{eq:MDP})
has a stationary deterministic optimal policy in the following theorem.
\begin{thm}
\label{thm:existence_optimal}There exists a stationary deterministic
policy that is average cost optimal. The optimal average cost is $V^{*}=\lim_{\mu\rightarrow1}(1-\mu)V_{\mu}(\bm{s})$
and independent of the initial state $\bm{s}$.
\end{thm}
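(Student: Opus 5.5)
We verify Sennott's conditions for countable-state average-cost MDPs \cite{sennottAverageCostOptimal1989}, which the paper already uses for Proposition \ref{prop:discounted cost}:

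(1) $V_{\mu}(\bm{s})<\infty$ for every $\bm{s}$ and $\mu$.

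(2) There is a nonnegative $L$ with $h_{\mu}(\bm{s})\triangleq V_{\mu}(\bm{s})-V_{\mu}(\bm{s}_{0})\geq-L$ for all $\bm{s}$ and $\mu$, where $\bm{s}_{0}$ is a fixed reference state.

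(3) There is a nonnegative $M_{\bm{s}}$ with $h_{\mu}(\bm{s})\leq M_{\bm{s}}$ for all $\mu$, and for each $\bm{s}$ there is an action $u$ with $\sum_{\bm{s}'}\Pr[\bm{s}'\mid\bm{s},u]M_{\bm{s}'}<\infty$.

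Under these conditions, a limit point of the discounted-optimal stationary policies as $\mu\to1$ is average-cost optimal. Moreover, $\lim_{\mu\to1}(1-\mu)V_{\mu}(\bm{s})$ exists, equals the optimal average cost $V^{*}$, and does not depend on $\bm{s}$.

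Condition (1) is the remark already made before Proposition \ref{prop:discounted cost}. Under the never-schedule policy the AoI grows linearly, so the discounted cost is $\sum_{t}\mu^{t-1}(\delta+t-1)<\infty$.

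For condition (2), I choose the reference state $\bm{s}_{0}=(1,\alpha)$. By Lemma \ref{lem:Lemma1:discount_monitonicity}, $V_{\mu}(\delta,\theta)\geq V_{\mu}(1,\theta)$. By Lemma \ref{lem:lemma2:concavity}, $V_{\mu}(1,\cdot)$ is non-increasing in $\theta$. Positive correlation gives $\alpha\geq\mathcal{T}^{k}(\theta)$ for every reachable belief, since $\mathcal{T}^{k}(\alpha)$ and $\mathcal{T}^{k}(1-\beta)$ all lie in $[1-\beta,\alpha]$. Hence $V_{\mu}(1,\alpha)\leq V_{\mu}(1,\theta)\leq V_{\mu}(\delta,\theta)$ for every state, so $h_{\mu}\geq0$ and we may take $L=0$. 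The restriction to reachable beliefs in $\mathcal{B}_{\delta}$ and one-dimensional monotonicity are what make this clean. Concavity is not really needed here; it is used later for the threshold structure.

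Condition (3) is the main obstacle. I bound $h_{\mu}(\bm{s})$ by a first-passage cost. Fix the policy $f'$ that always schedules ($u=1$) until the state returns to $(1,\alpha)$, and then follows the $\mu$-optimal policy. Let $\tau$ be the hitting time of $(1,\alpha)$ from $\bm{s}=(\delta,\theta)$. Using $\mu^{\tau}V_{\mu}(\bm{s}_{0})\leq V_{\mu}(\bm{s}_{0})$ (costs are nonnegative), we get
\begin{equation*}
V_{\mu}(\bm{s})\leq\mathbb{E}\Big[\sum_{t=1}^{\tau}C(\bm{s}(t),1)\Big]+V_{\mu}(\bm{s}_{0}).
\end{equation*}
So I take $M_{\bm{s}}$ to be the expected cost until hitting $\bm{s}_{0}$ under always-schedule.

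Under always-schedule, the first slot succeeds with probability $\theta\geq1-\beta>0$. Afterwards the belief is $1-\beta$ after each failure, so each later slot succeeds with probability $1-\beta$. Thus $\tau$ is stochastically dominated by $1+\mathrm{Geom}(1-\beta)$, and the accumulated cost satisfies
\begin{equation*}
M_{\bm{s}}\leq\sum_{t\geq1}\big(\delta+t-1+W\big)\beta^{t-1}\leq\frac{\delta+W}{1-\beta}+\frac{\beta}{(1-\beta)^{2}},
\end{equation*}
which is finite and affine in $\delta$.

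For the second part of condition (3), I take action $u=1$ at $\bm{s}$. The successors are $(1,\alpha)$ with $M=0$ and $(\delta+1,1-\beta)$ with $M$ affine in $\delta+1$, so $\sum_{\bm{s}'}\Pr[\bm{s}'\mid\bm{s},1]M_{\bm{s}'}<\infty$.

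The delicate points are two. First, the first-passage bound must hold uniformly in $\mu$; it does, because the bound above does not depend on $\mu$. Second, the dominance argument must use that reachable beliefs are at least $1-\beta$, which again relies on $\alpha\geq1-\beta$.

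With conditions (1)–(3) in place, Sennott's theorem yields a stationary deterministic average-cost optimal policy. It also gives that $V^{*}=\lim_{\mu\to1}(1-\mu)V_{\mu}(\bm{s})$ for every $\bm{s}$, independent of the initial state.
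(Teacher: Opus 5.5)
Your argument is correct. Your lower bound, $h_{\mu}\geq0$ at the reference state $(1,\alpha)$ via Lemmas \ref{lem:Lemma1:discount_monitonicity} and \ref{lem:lemma2:concavity}, is exactly the paper's condition 2. The difference is on the upper-bound side. The paper never states Sennott's upper-bound assumption explicitly. Instead it checks a packaged sufficient condition: some stationary policy induces an irreducible, aperiodic chain with finite average cost. It does this by computing the stationary distribution of the always-schedule chain and its average cost. You verify the upper bound $h_{\mu}(\bm{s})\leq M_{\bm{s}}$, together with the summability requirement, directly. Your $M_{\bm{s}}$ is a $\mu$-independent first-passage cost to $(1,\alpha)$ under always-schedule, controlled by a geometric tail.

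The paper's route is shorter and yields an explicit finite average cost for a concrete policy. Yours is more self-contained, and in one respect more careful. Under always-schedule only $(1,\alpha)$ and the states $(\delta,1-\beta)$ are recurrent. States with beliefs $\mathcal{T}^{k}(\alpha)$ or $\mathcal{T}^{k}(1-\beta)$, $k\geq1$, are transient, so the induced chain is not irreducible on all of $\mathcal{S}$. What is really needed is a finite first-passage cost from every state, including these transient ones, and your bound supplies this explicitly.

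One small inconsistency: your displayed bound uses $\Pr[\tau\geq t]=(1-\theta)\beta^{t-2}\leq\beta^{t-1}$ for $t\geq2$. That is domination by $\mathrm{Geom}(1-\beta)$ itself, not by $1+\mathrm{Geom}(1-\beta)$ as you wrote. Either one gives a finite, $\mu$-independent $M_{\bm{s}}$.
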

\begin{IEEEproof}
The proof verifies sufficient conditions for
average-cost optimality. Specifically, we construct a stationary deterministic
policy with finite average cost and then use Lemmas \ref{lem:Lemma1:discount_monitonicity}
and \ref{lem:lemma2:concavity} to lower-bound the relative discounted
value function. The detailed proof is provided in Section \ref{subsec:Proof-of-Theorem1}.
\end{IEEEproof}
Next, we investigate the structural property of the average cost optimal
policy, which facilitates the proof of indexability in the next subsection.
Unfortunately, although we have shown the optimality of a stationary
deterministic policy, the average cost optimality equation for the
MDP with a countably infinite state space and unbounded cost might
not be available \cite{hsuSchedulingAlgorithmsMinimizing2020}, \cite{CAVAZOSCADENA1991387}.
To tackle this challenge, in the following theorem, we show the structure
of $\mu$-discounted cost optimal policy via the discounted cost optimality
equation, and then extend it to the average cost optimal policy.
\begin{thm}
\label{thm:threshold}The average cost optimal policy is of threshold-type
in $\theta$, that is, for any $\delta$ and $\theta_{1}\leq\theta_{2}$,
if it is optimal to take action $u=1$ at state $(\delta,\theta_{1}),$
then it is also optimal to take action $u=1$ at state $(\delta,\theta_{2})$.
\end{thm}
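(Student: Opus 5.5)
The plan is to prove the threshold property first for the $\mu$-discounted problem, using the discounted cost optimality equation of Proposition \ref{prop:discounted cost}, and then transfer it to the average cost problem by a vanishing-discount argument, as in the proof of Theorem \ref{thm:existence_optimal}.

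\textbf{Step 1 (reduction to a sign condition).} For fixed $\delta$, define
\begin{equation*}
D_{\mu}(\delta,\theta)\triangleq Q_{\mu}((\delta,\theta),1)-Q_{\mu}((\delta,\theta),0)=W+\mu g(\theta),
\end{equation*}
where
\begin{equation*}
g(\theta)=\theta V_{\mu}(1,\alpha)+(1-\theta)V_{\mu}(\delta+1,1-\beta)-V_{\mu}(\delta+1,\mathcal{T}(\theta)).
\end{equation*}
It suffices to show that $g$ is non-increasing in $\theta$ on $\mathcal{B}_{\delta}$. Then $D_{\mu}(\delta,\theta_{1})\le0$ implies $D_{\mu}(\delta,\theta_{2})\le0$ for $\theta_{1}\le\theta_{2}$, which is exactly the threshold statement for the discounted-optimal policy. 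For $\theta_{1}<\theta_{2}$ we have
\begin{equation*}
g(\theta_{2})-g(\theta_{1})=-(\theta_{2}-\theta_{1})\big[V_{\mu}(\delta+1,1-\beta)-V_{\mu}(1,\alpha)\big]+\big[V_{\mu}(\delta+1,\mathcal{T}(\theta_{1}))-V_{\mu}(\delta+1,\mathcal{T}(\theta_{2}))\big].
\end{equation*}
The first bracket is non-negative. Indeed, $1-\beta\le\alpha$, so Lemma \ref{lem:lemma2:concavity} gives $V_{\mu}(1,\alpha)\le V_{\mu}(1,1-\beta)$, and Lemma \ref{lem:Lemma1:discount_monitonicity} gives $V_{\mu}(1,1-\beta)\le V_{\mu}(\delta+1,1-\beta)$. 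The second bracket is also non-negative, because $\mathcal{T}$ is non-decreasing (since $\alpha+\beta-1\ge0$) and $V_{\mu}$ is non-increasing in $\theta$. So the monotonicity lemmas alone do not settle the sign, and I need a quantitative slope bound.

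\textbf{Step 2 (the key slope bound; the main obstacle).} Since $\mathcal{T}(\theta_{2})-\mathcal{T}(\theta_{1})=(\alpha+\beta-1)(\theta_{2}-\theta_{1})$, the required bound is
\begin{equation*}
V_{\mu}(\delta',x_{1})-V_{\mu}(\delta',x_{2})\le\frac{x_{2}-x_{1}}{\alpha+\beta-1}\big[V_{\mu}(\delta',1-\beta)-V_{\mu}(1,\alpha)\big],\qquad x_{1}\le x_{2},
\end{equation*}
with $\delta'=\delta+1$, or a cruder bound that still dominates. Concavity from Lemma \ref{lem:lemma2:concavity} works against us here: for a concave non-increasing function the steepest descent occurs near $\alpha$, so chord arguments alone will not suffice.

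I would first try the representation $V_{\mu}(\delta',x)=\min_{\pi}\big[xA_{\pi}+(1-x)B_{\pi}\big]$. Here $A_{\pi}$ and $B_{\pi}$ are the discounted costs of the observation-adapted policy $\pi$ when the current channel is GOOD or BAD, respectively; this form holds because, for a fixed policy, the expected cost is affine in the initial belief. Taking $\pi$ optimal at $x_{2}$ gives
\begin{equation*}
V_{\mu}(\delta',x_{1})-V_{\mu}(\delta',x_{2})\le(x_{2}-x_{1})(B_{\pi}-A_{\pi}).
\end{equation*}
It then remains to bound $B_{\pi}-A_{\pi}$ by a coupling argument. Run the system from a GOOD and from a BAD current channel under common randomness. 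The two channel processes coalesce geometrically, since the probability that they still differ after $k$ steps is $(\alpha+\beta-1)^{k}$. While they differ, the only cost advantage of the GOOD copy is a possible reset of the AoI to $1$ with belief $\alpha$ instead of moving to $(\delta'+1,1-\beta)$. This should give a bound of the form $\big[V_{\mu}(\delta',1-\beta)-V_{\mu}(1,\alpha)\big]\sum_{k}\mu^{k}(\alpha+\beta-1)^{k}$, which is at most the required quantity.

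If the coupling proves awkward, the fallback is to prove the slope inequality directly by induction over the value iteration of Proposition \ref{prop:discounted cost}(b). This would be done jointly with the monotonicity and concavity already established for $V_{\mu,n}$ in Lemmas \ref{lem:Lemma1:discount_monitonicity} and \ref{lem:lemma2:concavity}, and the inequality would then be passed to the limit $n\to\infty$.

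\textbf{Step 3 (passage to the average cost).} Take a sequence $\mu_{k}\to1$ with discounted-optimal stationary policies $\pi_{\mu_{k}}$, each of threshold type in $\theta$ by Step 2. The state space $\mathcal{S}$ is countable and the action set is $\{0,1\}$, so a diagonal argument yields a subsequence converging pointwise to a stationary deterministic policy $\pi^{*}$. By the construction in the proof of Theorem \ref{thm:existence_optimal} (Sennott's conditions), such a limit policy is average cost optimal. The threshold property is preserved under pointwise limits: if $\pi^{*}(\delta,\theta_{1})=1$, then $\pi_{\mu_{k}}(\delta,\theta_{1})=1$ for all large $k$. Hence $\pi_{\mu_{k}}(\delta,\theta_{2})=1$ for those $k$, and therefore $\pi^{*}(\delta,\theta_{2})=1$, which is the claim.
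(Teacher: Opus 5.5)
Your Step 3 is essentially the paper's vanishing-discount argument, and Step 1 is a valid reduction. However, Step 1 trades the theorem for a strictly stronger claim: that $D_\mu(\delta,\theta)$ is non-increasing in $\theta$ for every $W$, even for values of $W$ at which it never changes sign. All of the content is then in Step 2, and the argument given there does not establish that claim.

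\textbf{The coupling, carried out correctly.} Use the stationary optimal $\pi$. Then the number $m$ of idle slots before its first transmission is deterministic, because no observations arrive before it. After that transmission both copies hold the same posterior, $\alpha$ or $1-\beta$. With $\rho=\alpha+\beta-1$ and $G(d)\triangleq V_\mu(d,1-\beta)-V_\mu(1,\alpha)$, this gives exactly $B_\pi-A_\pi=\mu^{m+1}\rho^{m}\,G(\delta'+m+1)$. The failure branch lands at AoI $\delta'+m+1$, not $\delta'+1$. So what you actually need is $(\mu\rho)^{m+1}G(\delta'+m+1)\le G(\delta')$; already for $m=0$ this reads $\mu\rho\,G(\delta+2)\le G(\delta+1)$. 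Since $G$ is non-decreasing in $d$ by Lemma~\ref{lem:Lemma1:discount_monitonicity}, this is a growth-rate bound on $V_\mu$ in the AoI direction. Neither Lemma~\ref{lem:Lemma1:discount_monitonicity} nor Lemma~\ref{lem:lemma2:concavity} supplies it.

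\textbf{The anticipated bound does not suffice.} Even the bound you hope for, $G(\delta')\sum_{k\ge0}(\mu\rho)^k$, does not imply the requirement $G(\delta')/\rho$. The sum equals $1/(1-\mu\rho)$, which exceeds $1/\rho$ as soon as $\rho(1+\mu)>1$, for example $\alpha=\beta=0.9$ with $\mu$ near $1$. That is precisely the regime Step 3 needs.

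\textbf{The required inequality is tight.} For $W=0$, transmitting is optimal at every extended belief, by Lemmas~\ref{lem:Lemma1:discount_monitonicity} and~\ref{lem:lemma2:concavity}. Hence $V_\mu(d,\cdot)$ is affine with slope $-\mu G(d+1)$, and $G(d)-\mu\rho\,G(d+1)=V_\mu(d,\alpha)-V_\mu(1,\alpha)$. This vanishes at the extended AoI $d=1$. So no estimate that loses a constant factor can close the inequality, and your fallback induction does not name a hypothesis that would control it.

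\textbf{How the paper avoids the slope bound.} It works only with signs.
\begin{itemize}
\item $D_\mu(\delta,\cdot)$ is convex on the extended interval $[0,1]$, being affine minus concave by Lemma~\ref{lem:lemma2:concavity}, and $D_\mu(\delta,0)=W\ge 0$.
\item If $D_\mu(\delta,1)<0$, i.e.\ $W<\mu[V_\mu(\delta+1,\alpha)-V_\mu(1,\alpha)]$, then $\{\theta: D_\mu(\delta,\theta)\le 0\}$ is automatically an interval containing $1$. That is already a threshold.
\item Only the case $D_\mu(\delta,1)\ge 0$ needs an extra ingredient, to rule out an interior dip below zero. The paper supplies it through the auxiliary inequality (a) of Lemma~\ref{lem:threshold}, carried jointly with the threshold property through value iteration.
\end{itemize}
An argument of this sign-based kind, rather than global monotonicity of $D_\mu$ backed by a quantitative slope bound, is the missing piece.
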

\begin{IEEEproof}
The proof first establishes the threshold
structure for the discounted-cost problem through value iteration.
The induction compares the Q-functions of the active and idle actions
as functions of the belief state and uses Lemma \ref{lem:lemma2:concavity}
to show that the action-switching point is a belief threshold. The
average-cost result then follows by a vanishing-discount argument
using the existence result in Theorem \ref{thm:existence_optimal}.
The detailed proof is provided in Section \ref{subsec:Proof-of-Theorem2}.
\end{IEEEproof}

\subsection{Whittle's Index Policy}

The premise of applying Whittle's index policy is to prove the indexability
of the sub-problem, which may not be satisfied by all RMAB problems
and is usually difficult to establish. In the previous subsection,
we have shown the threshold structure of the average cost optimal
policy, based on which we establish the indexability for the decoupled
problem. We first give the definition of indexability.
\begin{defn}
\label{def:indexable}The sub-problem is indexable if a passive set,
i.e., the set of states for which it is optimal to make the device
idle, increases monotonically from the empty set to the entire state
space with the increasing of the extra cost $W.$
\end{defn}
\begin{thm}
\label{thm:indexability}The sub-problem defined in (\ref{eq:MDP})
is indexable.
\end{thm}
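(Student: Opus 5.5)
We plan to prove indexability by exploiting the threshold structure in Theorem \ref{thm:threshold}. For each AoI value $\delta$ and cost $W$, the optimal action at $(\delta,\theta)$ is active if and only if $\theta\geq\theta^{*}_{\delta}(W)$, for some belief threshold $\theta^{*}_{\delta}(W)$. We fix a tie-breaking convention, choosing the idle action whenever both actions are optimal. With this convention, the passive set is
\[
\mathcal{P}(W)=\{(\delta,\theta)\in\mathcal{S}:\theta<\theta^{*}_{\delta}(W)\}.
\]
Indexability then reduces to two claims. First, for each fixed $\delta$ the threshold $\theta^{*}_{\delta}(W)$ is non-decreasing in $W$. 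Second, $\mathcal{P}(W)$ equals $\emptyset$ for $W$ sufficiently negative (or $W=0$, with the appropriate convention) and grows to all of $\mathcal{S}$ as $W\to\infty$.

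The route we would take is to work with the discounted problem and pass to the limit, following the vanishing-discount argument used for Theorem \ref{thm:threshold}. Define the advantage function
\[
D_{\mu}(\delta,\theta;W)=Q_{\mu}((\delta,\theta),0;W)-Q_{\mu}((\delta,\theta),1;W).
\]
The passive set is exactly $\{D_{\mu}\leq0\}$. Monotonicity of the passive set in $W$ therefore amounts to showing that $D_{\mu}(\delta,\theta;W)$ is non-increasing in $W$ for every state.

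An equivalent and more tractable approach uses the standard derivative characterization. Write $V_{\mu}(\bm{s};W)=\min_{\pi}\{A_{\mu,\pi}(\bm{s})+W\,N_{\mu,\pi}(\bm{s})\}$, where $A_{\mu,\pi}$ is the discounted AoI and $N_{\mu,\pi}$ is the discounted number of activations. Then $V_{\mu}$ is concave and piecewise linear in $W$, with right-derivative equal to $N_{\mu,\pi^{*}_{W}}(\bm{s})$. We would then express
\[
D_{\mu}=\mu\big[V_{\mu}(\delta+1,\mathcal{T}(\theta))-\theta V_{\mu}(1,\alpha)-(1-\theta)V_{\mu}(\delta+1,1-\beta)\big]-W.
\]
Showing that $\partial D_{\mu}/\partial W\leq0$ requires
\[
\mu\big[N(\delta+1,\mathcal{T}(\theta))-\theta N(1,\alpha)-(1-\theta)N(\delta+1,1-\beta)\big]\leq1.
\]
In words, starting passive should save at most one discounted activation relative to starting active. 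We would prove this by a coupling argument. Take the optimal policy from the passive branch and delay it by one slot; this yields a feasible policy from the active branch whose activation count is at most one more than the passive branch's count. Combined with an optimality (interchange) argument, this gives the desired inequality.

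The boundary conditions are handled separately. At $W=0$ (or $W<0$), transmitting never costs more: active weakly dominates, since it can only reset the AoI. This follows from Lemma \ref{lem:Lemma1:discount_monitonicity}, using $V_{\mu}(1,\alpha)\leq V_{\mu}(\delta+1,\cdot)$ together with a comparison of beliefs via Lemma \ref{lem:lemma2:concavity}. As $W\to\infty$, the cost of activation exceeds any AoI saving for each fixed state, so every state becomes passive.

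We expect the main obstacle to be the inequality bounding the activation-count difference. The states reached under the active and passive branches differ in both AoI and belief, so a naive coupling is not obviously valid. If the coupling proves awkward, a fallback is to compute the average-cost value under threshold policies explicitly: use the renewal structure, in which the AoI resets upon success and the belief returns to $\alpha$. This gives the average cost as a ratio $(\text{AoI cost}+W\cdot\text{activations})/\text{cycle length}$. We would then show that the optimal threshold, as a function of $W$, moves monotonically, using the fact that the activation rate of threshold policies decreases as the threshold rises.
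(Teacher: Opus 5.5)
\textbf{Verdict: there is a genuine gap.} The key activation-count inequality is not established by the coupling you propose.

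Your skeleton matches the paper's: threshold structure from Theorem~\ref{thm:threshold}, a belief threshold that moves monotonically in $W$, and the two boundary regimes. Your boundary arguments are fine; the difference, and the gap, lie in the monotonicity step.

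\textbf{What the paper does.} The paper fixes $\theta^{*}_{W_1}$ and uses $Q(\delta,\theta^{*}_{W_1},0)=Q(\delta,\theta^{*}_{W_1},1)$ at $W_1$. It then concludes that idling is optimal at $W_2>W_1$ because the active Q-function gains $W_2-W_1$. In doing so it evaluates both Q-functions at $W_2$ with the value functions computed at $W_1$, so it tacitly treats the bracket in your $D_{\mu}$ as independent of $W$. You rightly saw that this dependence must be controlled. You reduced it to $\mu[N(\delta+1,\mathcal{T}(\theta))-\theta N(1,\alpha)-(1-\theta)N(\delta+1,1-\beta)]\le1$ with $N=N_{\mu,\pi^{*}_{W}}$. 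Given the threshold structure, this inequality carries essentially all the content of indexability.

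\textbf{Why your coupling does not prove it.} Replaying the passive branch's optimal continuation on the active branch is feasible, since the active branch observes more. It uses exactly one extra discounted activation and has pathwise no larger AoI. So all it yields is $Q_{\mu}(\bm{s},1)\le Q_{\mu}(\bm{s},0)+W$. That is an \emph{upper} bound on the activations of a \emph{suboptimal} active-branch policy. What you need is a \emph{lower} bound on the activations of $\pi^{*}_{W}$ started from $(1,\alpha)$ and $(\delta+1,1-\beta)$. In words, you must show that activating now does not let the optimal policy save more than one discounted future activation. Cost optimality only compares $A+WN$ from a fixed initial state. It says nothing about $N_{\pi^{*}_{W}}$ across different initial states, so no unspecified interchange step bridges this.

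\textbf{The inequality is quantitatively tight.} In the i.i.d.\ case $\alpha=1-\beta$ with optimal AoI threshold $H$, for $\delta\ge H-1$ your left-hand side equals $x/((1-\mu)+x)$ with $x=\alpha\mu(1-\mu^{H-1})$. Its gap to $1$ vanishes as $\mu\to1$ and $H\to\infty$, which makes a soft domination argument implausible. What is needed is an explicit computation of the discounted activation counts under the threshold policy, along the deterministic AoI and belief trajectories between observations. This is in the spirit of Liu and Zhao's indexability proof for the Gilbert--Elliott channel \cite{5605371}. It suffices to check the derivative inequality where $D_{\mu}$ vanishes, i.e., at $\theta=\theta^{*}_{\delta}(W)$. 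Here, however, the counts depend on the whole curve $\delta\mapsto\theta^{*}_{\delta}(W)$.

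\textbf{Your fallback does not close the gap either.}
\begin{itemize}
\item Comparing renewal costs of threshold policies (adding $A_1+W_1N_1\le A_2+W_1N_2$ and $A_2+W_2N_2\le A_1+W_2N_1$) shows only that the optimal activation \emph{rate} is non-increasing in $W$. That is weaker than nestedness of the passive sets.
\item The policy class is indexed by a whole threshold curve, not by one parameter.
\item The average cost is insensitive to the actions at states off the recurrent class, such as $(1+k,\mathcal{T}^{k}(\alpha))$ beyond the first activation point. The optimal actions at those states nonetheless determine the passive set.
\end{itemize}
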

\begin{IEEEproof}
The proof uses the threshold structure in
Theorem \ref{thm:threshold}. For each fixed AoI, increasing the activation
charge $W$ shifts the belief threshold monotonically upward, so the
passive set expands from the empty set to the entire belief space.
This monotone expansion is exactly the indexability condition. The
detailed proof is provided in Section \ref{subsec:Proof-of-Theorem8}.
\end{IEEEproof}
After establishing the indexability of the sub-problem, we can calculate
the Whittle's index, which is the infimum cost $W$ that makes both
actions equally desirable, for each state. We first consider a special
case, where $\alpha=1-\beta$. In this case, the belief is irrespective
of time, i.e., $\theta(t)=\alpha$ for all $t$, and the channel realization
is independent and identically distributed over time. Hence, the state
is reduce to the AoI, i.e., $s(t)=\delta(t)$. In the following theorem,
we obtain the Whittle's index in closed-form.
\begin{thm}
\label{thm9:special_case}When $\alpha=1-\beta$, the Whittle's index
for state $\delta$ is
\begin{equation}
W_{\delta}=\delta+\frac{\delta(\delta-1)}{2}\alpha.
\end{equation}
\end{thm}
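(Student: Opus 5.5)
In the special case $\alpha=1-\beta$, $\mathcal{T}(\theta)=\alpha$ for every $\theta$, so the belief is constantly $\alpha$. The sub-problem (\ref{eq:MDP}) therefore reduces to an MDP on the AoI alone. In this MDP, scheduling in state $\delta$ resets to $1$ with probability $\alpha$ and otherwise moves to $\delta+1$, while idling moves to $\delta+1$ deterministically.

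The plan has four steps.

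\textbf{Step 1: threshold structure in $\delta$.} I would first show that for fixed $W$ the optimal policy is a threshold in $\delta$: schedule if and only if $\delta\ge H$. Theorem~\ref{thm:threshold} only gives monotonicity in $\theta$, so this step needs its own argument. Write the discounted Q-difference as
\[
Q_{\mu}(\delta,1)-Q_{\mu}(\delta,0)=W-\mu\alpha\bigl(V_{\mu}(\delta+1)-V_{\mu}(1)\bigr).
\]
By Lemma~\ref{lem:Lemma1:discount_monitonicity}, $V_{\mu}$ is non-decreasing in $\delta$, so this difference is non-increasing in $\delta$. Hence the set of $\delta$ where scheduling is optimal is an upper set. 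Passing $\mu\to1$ as in Theorem~\ref{thm:existence_optimal} transfers the threshold structure to the average-cost problem.

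\textbf{Step 2: average cost of the threshold-$H$ policy.} Under threshold $H$, a renewal cycle starts at $\delta=1$. The AoI climbs deterministically to $H$ in $H-1$ slots. It then keeps transmitting, each attempt succeeding with probability $\alpha$, so it stays $G\sim\mathrm{Geom}(\alpha)$ slots in states $H,H+1,\ldots$.

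The expected cycle length is $L(H)=H-1+1/\alpha$.

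The expected accumulated AoI per cycle is
\[
A(H)=\frac{H(H-1)}{2}+\mathbb{E}\Bigl[\sum_{j=0}^{G-1}(H+j)\Bigr]=\frac{H(H-1)}{2}+\frac{H}{\alpha}+\frac{1-\alpha}{\alpha^{2}}.
\]

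The expected number of transmissions per cycle is $1/\alpha$.

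By the renewal-reward theorem, the average cost is
\[
\bar C(H)=\frac{A(H)+W/\alpha}{L(H)}.
\]

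\textbf{Step 3: locate the indifference point.} Indexability holds by Theorem~\ref{thm:indexability}, so the Whittle index $W_\delta$ is the value of $W$ at which thresholds $\delta$ and $\delta+1$ yield equal average cost. Equivalently, at that $W$ both actions are optimal at state $\delta$. I would set $\bar C(\delta)=\bar C(\delta+1)$ and solve for $W$, which is linear in $W$.

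To do this, cross-multiply. Note that $L(\delta+1)-L(\delta)=1$ and $A(\delta+1)-A(\delta)=\delta+1/\alpha$. The equation becomes
\[
\bigl(A(\delta)+W/\alpha\bigr)\cdot 1=\bigl(\delta+1/\alpha\bigr)L(\delta).
\]
Multiply through by $\alpha$ and substitute $A(\delta)$ and $L(\delta)$; collecting terms should give $W=\delta+\alpha\delta(\delta-1)/2$. I would check $\delta=1$ directly: there $W_1=1$, which is consistent with the formula.

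\textbf{Step 4: monotonicity of the index.} Finally, I would verify that $W_\delta$ is increasing in $\delta$. This confirms that for $W\in[W_{\delta-1},W_\delta)$ the optimal threshold is exactly $\delta$, and that the passive set grows with $W$.

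The main obstacle is rigor in the step from "equal average costs of adjacent thresholds" to "indifference at state $\delta$." The cleanest route is to argue that $\bar C(H)$ is unimodal in $H$. The sign of $\bar C(H+1)-\bar C(H)$ equals the sign of $\alpha\bigl(A(H)+W/\alpha\bigr)-(\alpha H+1)L(H)$, and this is monotone in $H$. Unimodality then gives that the optimal threshold jumps from $\delta$ to $\delta+1$ exactly at $W_\delta$.
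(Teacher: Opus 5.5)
Your proposal is correct and follows essentially the paper's route. The paper likewise reduces to an AoI-only threshold problem, computes the average cost $\bar C(n,W)$ of the threshold-$n$ policy (via the DTMC stationary distribution rather than renewal--reward, arriving at the same expression), and solves $\bar C(\delta,W)=\bar C(\delta+1,W)$. Your self-contained threshold-in-$\delta$ argument and your unimodality check are additions: the paper cites prior work for the threshold structure and omits the unimodality check entirely.

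One small slip: the sign of $\bar C(H+1)-\bar C(H)$ is that of $(\alpha H+1)L(H)-\alpha\bigl(A(H)+W/\alpha\bigr)=W_H-W$, which is the negative of the expression you wrote. This does not hurt you, since that quantity is still monotone in $H$ and gives exactly the unimodality you need.
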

\begin{IEEEproof}
When $\alpha=1-\beta$, the belief state is
constant and the single-arm MDP reduces to an AoI-only threshold problem.
We compute the stationary distribution of the DTMC induced by a threshold
$n$, obtain the corresponding average cost, and identify the Whittle
index as the value of $W$ at which the average costs under thresholds
$\delta$ and $\delta+1$ are equal. The detailed proof is provided
in Section \ref{subsec:Proof-of-Theorem9}.
\end{IEEEproof}
Next, we investigate the Whittle's index in the general case, where
$\alpha\neq1-\beta$. Unfortunately, obtaining a closed-form expression
for the Whittle's index is very difficult, if not impossible. Hence,
we propose an algorithm to compute the Whittle's index by calculating
the Q-function. Since the state of the MDP is countably infinite,
it is impractical to directly apply a classic relative value iteration
(RVI) algorithm \cite{dimitrip.bertsekasDynamicProgrammingOptimal2007}
to obtain the Q-function. Therefore, we construct a finite-state MDP
to approximate the original MDP in (\ref{eq:MDP}) and show the convergence
of this finite-state approximation.

First, we construct the finite-state MDP by truncating the value of
the AoI and the number of Markov transitions. In particular, the AoI
is upper-bounded by $\hat{\delta}$, which can be arbitrary large,
and the corresponding belief state is limited by $\mathcal{T}^{\hat{\delta}-1}(\alpha)$
or $\mathcal{T}^{\hat{\delta}-2}\left(1-\beta\right)$. We assume
$\hat{\delta}>M$ without loss of generality. The state space of the
approximate MDP is then given by $\mathcal{S}^{\hat{\delta}}\triangleq\{(\delta,\theta):\delta\in\{1,2,\ldots,\hat{\delta}\},\theta\in\mathcal{B}_{\delta}\}$.
Since the Markov channel is positively correlated, we have $\mathcal{T}^{k-1}(\alpha)>\mathcal{T}^{k}(\alpha)$
and $\mathcal{T}^{k-1}(1-\beta)<\mathcal{T}^{k}(1-\beta)$. Moreover,
as $k\rightarrow\infty$, $\mathcal{T}^{k}(\alpha)$ decreasingly
converges to $\theta^{*}$, while $\mathcal{T}^{k}(1-\beta)$ increasingly
converges to $\theta^{*}$, where $\theta^{*}=\frac{1-\beta}{2-\alpha-\beta}$
is the equilibrium belief state. Therefore, we have $\mathcal{T}^{\hat{\delta}-1}(\alpha)>\mathcal{T}^{\hat{\delta}-1}(1-\beta)>\mathcal{T}^{\hat{\delta}-2}(1-\beta)$.
The evolutions of the bounded AoI and belief are given, respectively,
as follows:
\begin{equation}
\delta(t+1)=\begin{cases}
1, & u(t)=1,h(t)=1,\\
\min\{\hat{\delta},\delta(t)+1\}, & u(t)=1,h(t)=0,\\
\min\{\hat{\delta},\delta(t)+1\}, & u(t)=0,
\end{cases}\label{eq:AoI-update-bound}
\end{equation}
and
\begin{equation}
\theta(t+1)=\begin{cases}
\alpha, & u(t)=1,o(t)=1,\\
1-\beta, & u(t)=1,o(t)=0,\\
\mathcal{T_{\hat{\delta}}}(\theta(t)), & u(t)=0,
\end{cases}\label{eq:belief-update-bound}
\end{equation}
where
\begin{equation}
\mathcal{T_{\hat{\delta}}}(\theta(t))=\begin{cases}
\mathcal{T}^{\hat{\delta}-1}(\alpha), & \mathcal{T}^{\hat{\delta}-2}(1-\beta)<\theta(t)<\mathcal{T}^{\hat{\delta}-1}(\alpha),\\
\mathcal{T}(\theta(t)), & \text{ otherwise}.
\end{cases}
\end{equation}

By increasing $\hat{\delta}$, we obtain a sequence of finite-state
approximate MDPs. Then, we show the convergence of the proposed sequence
of approximate MDPs to the original MDP in the following theorem.
\begin{thm}
\label{thm:Approximat}Let $V^{(\hat{\delta})*}$ be the minimum average
cost of the approximate MDP with respect to the upper bound $\hat{\delta}.$
We have $V^{(\hat{\delta})*}\rightarrow V^{*}$ as $\hat{\delta}\rightarrow\infty$.
\end{thm}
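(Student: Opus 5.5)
The plan is to invoke the approximating-sequence framework of Sennott for countable-state average-cost MDPs: if a sequence of finite-state MDPs approximates the original one and satisfies certain conditions, then the optimal average costs converge. Concretely, we will verify Sennott's assumptions (typically called AC1--AC3 together with the approximating-sequence conditions). \textbf{(i)} Each truncated MDP on $\mathcal{S}^{\hat{\delta}}$ is an approximating MDP. Its transitions from $(\delta,\theta)$ coincide with those in (\ref{eqMDPTrans}) whenever the successor lies in $\mathcal{S}^{\hat{\delta}}$. The excess probability mass of transitions that would leave $\mathcal{S}^{\hat{\delta}}$ is redistributed to the boundary states $(\hat{\delta},\cdot)$ via (\ref{eq:AoI-update-bound})--(\ref{eq:belief-update-bound}). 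As $\hat{\delta}\to\infty$, every fixed state and transition of the original MDP is eventually reproduced exactly, so $\mathcal{S}^{\hat{\delta}}\uparrow\mathcal{S}$ and the transition kernels converge pointwise. \textbf{(ii)} The discounted cost optimality equation and the existence of a finite-average-cost stationary policy are already available from Proposition \ref{prop:discounted cost} and Theorem \ref{thm:existence_optimal}.

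The core step is a uniform lower bound on the relative discounted value functions of the truncated MDPs. Fix a reference state, e.g.\ $\bm{s}_{0}=(1,\alpha)$. We need $h^{(\hat{\delta})}_{\mu}(\bm{s})=V^{(\hat{\delta})}_{\mu}(\bm{s})-V^{(\hat{\delta})}_{\mu}(\bm{s}_{0})\ge -N$ uniformly in $\hat{\delta}$ and $\mu$, together with an upper bound $h^{(\hat{\delta})}_{\mu}(\bm{s})\le M_{\bm{s}}$ whose bound $M_{\bm{s}}$ is also uniform in $\hat{\delta}$. We also need $\limsup_{\hat{\delta}}V^{(\hat{\delta})*}\le V^{*}$.

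\textbf{Lower bound.} Repeat the induction of Lemmas \ref{lem:Lemma1:discount_monitonicity} and \ref{lem:lemma2:concavity} on the truncated MDP. The same value-iteration argument goes through, because the truncation $\min\{\hat{\delta},\delta+1\}$ is monotone and $\mathcal{T}_{\hat{\delta}}$ preserves the ordering of beliefs. This gives that $V^{(\hat{\delta})}_{\mu}$ is non-decreasing in $\delta$ and non-increasing in $\theta$. Since $\alpha$ is the largest attainable belief and $\delta=1$ the smallest AoI, $\bm{s}_{0}$ minimizes $V^{(\hat{\delta})}_{\mu}$, so $h^{(\hat{\delta})}_{\mu}\ge 0$.

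\textbf{Upper bound.} Use the policy that transmits every slot. From any state, it reaches $\bm{s}_{0}$ after a geometric number of slots; the success probability per slot is at least $\min\{1-\beta,\alpha\}>0$, since the belief is always at least $1-\beta$ under positive correlation. The expected cost incurred before hitting $\bm{s}_{0}$ is bounded by a function of $\delta$ that does not depend on $\hat{\delta}$, because truncation only lowers AoI. Hence $h^{(\hat{\delta})}_{\mu}(\bm{s})\le M_{\bm{s}}<\infty$ uniformly.

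\textbf{Main obstacle.} The hardest condition is $\limsup_{\hat{\delta}\to\infty}V^{(\hat{\delta})*}\le V^{*}$, since truncation changes the dynamics. We plan to obtain it by showing that the truncated MDP is never costlier than the original under a coupled policy. Take the optimal stationary policy of the original MDP from Theorem \ref{thm:existence_optimal}, which is of threshold type by Theorem \ref{thm:threshold}. Apply it in the truncated MDP, mapping each state $(\hat{\delta},\theta)$ to the original action prescribed at a dominating state. The coupled AoI under truncation is pathwise no larger, because $\min\{\hat{\delta},\cdot\}$ can only decrease it. The belief deviation from the $\mathcal{T}_{\hat{\delta}}$ rule affects only states with $\delta=\hat{\delta}$. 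Its influence on success probabilities vanishes, because $\mathcal{T}^{k}(\alpha)-\mathcal{T}^{k}(1-\beta)\to0$ geometrically at rate $(\alpha+\beta-1)^{k}$. The stationary probability of visiting AoI $\hat{\delta}$ also decays geometrically. Together these give a cost gap that tends to $0$. Combining this with Sennott's theorem yields $V^{(\hat{\delta})*}\to V^{*}$.
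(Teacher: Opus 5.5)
The first Sennott condition is handled as in the paper. You take reference state $(1,\alpha)$ with $L=0$ from the monotonicity in Lemmas~\ref{lem:Lemma1:discount_monitonicity}--\ref{lem:lemma2:concavity}, and bound the relative value by a first-passage cost that does not grow with $\hat{\delta}$. Your always-transmit policy is slightly cleaner than the paper's randomized $g$: the hitting time of $(1,\alpha)$ then has the same law in both MDPs, and only the AoI is capped.

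The gap is in the step you single out, $\limsup_{\hat{\delta}\to\infty}V^{(\hat{\delta})*}\le V^{*}$.

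\textbf{The coupling does not preserve the ordering.} Your coupling runs a stationary policy on $\mathcal{S}^{\hat{\delta}}$. But a boundary state $(\hat{\delta},\theta)$ stands in for infinitely many original states $(\hat{\delta}+k,\cdot)$, and the original optimal policy may prescribe different actions at them. Also, whenever the truncated chain succeeds while the original fails, the two chains afterwards sit at unrelated states. Once the action sequences differ, the claim that the truncated AoI is pathwise no larger is false: the original chain can transmit and succeed while the truncated chain idles at AoI $\hat{\delta}$.

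\textbf{The tail estimate is not available.} Geometric decay of the stationary probability of AoI $\hat{\delta}$ under the original optimal policy would require that this policy transmits at all sufficiently large AoI, and that its chain is positive recurrent. Theorem~\ref{thm:threshold} only gives a threshold in $\theta$ for each fixed $\delta$, so neither fact has been established. As written, the vanishing-error argument does not go through.

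\textbf{The missing idea.} The truncation is not an error of indeterminate sign. The association $v$ sends every state outside $\mathcal{S}^{\hat{\delta}}$ to one with no larger AoI and no smaller belief: beliefs in $(\mathcal{T}^{\hat{\delta}-2}(1-\beta),\mathcal{T}^{\hat{\delta}-1}(\alpha))$ are raised to $\mathcal{T}^{\hat{\delta}-1}(\alpha)$. Since $V_{\mu}$ is non-decreasing in $\delta$ and non-increasing in $\theta$, this gives $\sum_{\bm{s}'\in\mathcal{S}^{\hat{\delta}}}P^{\hat{\delta}}_{\bm{s},\bm{s}'}(u)V_{\mu}(\bm{s}')\le\sum_{\bm{s}'\in\mathcal{S}}P_{\bm{s},\bm{s}'}(u)V_{\mu}(\bm{s}')$. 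A value-iteration induction then yields $V^{\hat{\delta}}_{\mu}\le V_{\mu}$ pointwise. Multiplying by $1-\mu$ and letting $\mu\to1$ gives $V^{(\hat{\delta})*}\le V^{*}$ for every $\hat{\delta}$, with no error terms and no stationary-distribution estimates. This is the paper's argument, and you already had both ingredients.

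\textbf{Repairing the sample-path route.} If you prefer a coupling proof, let the truncated controller copy the original chain's actions. This is a randomized history-dependent policy that simulates the original outcome through a common uniform variable $U$, with success events $\{U<\theta_{O}\}\subseteq\{U<\theta_{T}\}$. Then check inductively that $\delta_{T}\le\delta_{O}$ and $\theta_{T}\ge\theta_{O}$. Since stationary deterministic policies are optimal in a finite MDP, this again gives $V^{(\hat{\delta})*}\le V^{*}$ exactly.
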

\begin{IEEEproof}
The proof verifies the convergence conditions
for finite-state approximations of countable-state average-cost MDPs.
We first bound the relative discounted value function of the truncated
MDP by a finite first-passage cost and then show that the optimal
average cost of the truncated MDP is asymptotically no larger than
that of the original MDP. These two steps imply $V^{(\hat{\delta})*}\rightarrow V^{*}$
as $\hat{\delta}\rightarrow\infty$. The detailed proof is provided
in Section \ref{subsec:Proof-of-Theorem-approximate}.
\end{IEEEproof}
\begin{algorithm}[th]
\caption{\label{algorithm_of_RVI_structure}RVI based on Threshold Sturcture}

\small

\begin{algorithmic}[1]

\STATE \textbf{Initialization}: $h^{(\hat{\delta})}_{-1}(\bm{s})=\infty,h^{(\hat{\delta})}_{0}(\bm{s})=0,\forall\bm{s}\in\mathcal{S^{\hat{\delta}}}$,
the number of iterations $n=0$, tolerance $\bar{\epsilon}>0$.

\WHILE{ $\max_{\bm{s}\in\mathcal{S^{\hat{\delta}}}}\{\left|h^{(\hat{\delta})}_{n}(\bm{s})-h^{(\hat{\delta})}_{n-1}(\bm{s})\right|\}>\bar{\epsilon}$
}

\STATE $\theta^{*}(\delta)=1$ for all $\bm{s}\in\mathcal{S^{\hat{\delta}}}$.

\FOR {$\bm{s}\in$ $\mathcal{S}^{(\hat{\delta})}$ }

\IF {$\theta\geq\theta^{*}(\delta)$ }

\STATE $u^{*}=1$.

\ELSE

\STATE $u^{*}=\operatorname*{arg\,min}_{u\in\{0,1\}}\{C(\bm{s},u)+\sum_{\bm{s}'\in\mathcal{S}^{\hat{\delta}}}P^{\hat{\delta}}_{\bm{s,s'}}(u)h^{(\hat{\delta})}_{n}(\bm{s}')\}.$

\IF {$u^{*}=1$ }

\STATE $\theta^{*}(\delta)=\theta$ .

\ENDIF

\ENDIF

\STATE $Q^{(\hat{\delta})}_{n+1}(\bm{s},u^{*})=C(\bm{s},u^{*})+\sum_{\bm{s}'\in\mathcal{S}^{\hat{\delta}}}P^{\hat{\delta}}_{\bm{s,s'}}(u^{*})h^{(\hat{\delta})}_{n}(\bm{s}').$

\STATE $h^{(\hat{\delta})}_{n+1}(\bm{s})=Q^{(\hat{\delta})}_{n+1}(\bm{s},u^{*})-\min_{u}Q^{(\hat{\delta})}_{n+1}(\bm{s}_{0},u)$.

\ENDFOR

\STATE $n=n+1$ .

\ENDWHILE

\end{algorithmic}
\end{algorithm}

Theorem \ref{thm:Approximat} justifies computing
the index on a finite-state approximation of the single-arm MDP. Based
on this approximation, Algorithm \ref{algorithm_of_RVI_structure}
exploits the threshold structure to implement RVI more efficiently
and to obtain the Q-function needed for index computation. For the
approximate MDP, we denote $h^{(\hat{\delta})}(\bm{s})$ as the relative
cost function of state $\bm{s}$. It satisfies the Bellman optimality
equation as follows:
\begin{align}
V^{*}+h^{(\hat{\delta})}(\bm{s}) & =\min_{u\in\{0,1\}}Q^{(\hat{\delta})}(\bm{s},u),\\
Q^{(\hat{\delta})}(\bm{s},u) & =C(\bm{s},u)+\sum_{\bm{s}'\in\mathcal{S}^{\hat{\delta}}}P^{\hat{\delta}}_{\bm{s,s'}}(u)h^{(\hat{\delta})}(\bm{s}'),
\end{align}
where $V^{*}$ is the optimal average cost which is a constant for
all states in $\mathcal{S^{\hat{\delta}}}$ and $P^{\hat{\delta}}_{\bm{s,s'}}(u)$
is the transition probability from state $\bm{s}$ to $\bm{s'}$ on
state space $\mathcal{S^{\hat{\delta}}}$ with action $u$. The RVI
algorithm begins with a random initialization of $h^{(\hat{\delta})}_{0}(\bm{s}),\forall\bm{s}\in\mathcal{S^{\hat{\delta}}}$,
and then updates the Q-function and relative cost function continuously
until convergence. Actually, in Theorem \ref{thm:threshold}, we find
that for fixed AoI, the optimal policy of MDP possesses a threshold
structure with respect to belief and we can reduce the computational
complexity in the process of policy improvement accordingly. Particularly,
if the optimal action for a certain state $\bm{s}=(\delta,\theta_{1})$
is updating, then it is still optimal to schedule for the states $\bm{s}=(\delta,\theta_{2})$
where $\theta_{2}\geq\theta_{1}$. Using such property of optimal
policy, we can determine the optimal action of some state immediately
(Lines 5-6 in Algorithm \ref{algorithm_of_RVI_structure}). Otherwise,
the optimal action only can be determined by comparison of $Q^{(\hat{\delta})}_{n}(\bm{s},u)$
(Line 8 in Algorithm \ref{algorithm_of_RVI_structure}). When $n\rightarrow\infty$,
$Q^{(\hat{\delta})}_{n+1}(\bm{s},u)$ will converge to $Q^{(\hat{\delta})}(\bm{s},u)$.

Let $W(\bm{s})$ denote the Whittle's index of state $\bm{s}$. According
to its definition, we can obtain $W(s)$ by finding the solution to
$Q^{(\hat{\delta})}(\bm{s},1)=Q^{(\hat{\delta})}(\bm{s},0)$. However,
it is challenging to solve this equation in closed-form. To address
this problem, we define a function $F(W,\bm{s})=Q^{(\hat{\delta})}_{W}(\bm{s},1)-Q^{(\hat{\delta})}_{W}(\bm{s},0)$
and obtain $W(s)$ by finding the zero point of the function. From
the process of proving Theorem \ref{thm:threshold}, we can find that
$F(W,\bm{s})$ is a non-decreasing function with respect to $W$.
Then, we first determine the search region $[W_{LB},W_{UB}]$, where
$W_{LB}$ and $W_{UB}$ are the lower bound and the upper bound of
the region, respectively. Additionally, the length $d$ of the region
is the distance between $W_{LB}$ and $W_{UB}$ (Lines 3-6 in Algorithm
\ref{alg:calculate-WI}). After determining the search region, we
use Ternary Search to find the zero point (Lines 7-14 in Algorithm
\ref{alg:calculate-WI}). In the process of calculating $F(W,\bm{s})$,
we can use Algorithm \ref{algorithm_of_RVI_structure} to obtain the
value of $Q^{(\hat{\delta})}_{W}(\bm{s},u)$.{} Repeating
this procedure for each state $\bm{s}\in\mathcal{S}^{(\hat{\delta})}$
gives the corresponding Whittle index $W(\bm{s})$ in the general
case, as summarized in Algorithm \ref{alg:calculate-WI}.

\begin{algorithm}[t]
\caption{\label{alg:calculate-WI}Computing the Whittle's index}

\begin{algorithmic}[1]

\STATE \textbf{Initialization}: Set $W_{LB}=0$ and $W_{UB}=W_{LB}+d$.

\FOR {$\bm{s}$ in $\mathcal{S}^{(\hat{\delta})}$ }

\WHILE{ $F(W_{UB},\bm{s})<0$ }

\STATE $W_{LB}=W_{UB}$.

\STATE $W_{UB}=W_{LB}+d$.

\ENDWHILE

\WHILE {$W_{UB}-W_{LB}>\xi$}

\STATE $W'=\frac{W_{UB}+W_{LB}}{2}$

\IF{$F(W',\bm{s})<0$}

\STATE $W_{LB}=W',W_{UB}=W_{UB}$ .

\ELSE

\STATE$W_{LB}=W_{LB}$, $W_{UB}=W'$.

\ENDIF

\ENDWHILE

\STATE $W(\bm{s})=\frac{W_{LB}+W_{UB}}{2}.$

\ENDFOR

\end{algorithmic}
\end{algorithm}

Now, we are ready to propose the Whittle's index-based scheduling
policy.{} With these state-wise indices, the Whittle's
index-based scheduling policy is implemented as follows. At the beginning
of each time slot, the destination computes the Whittle's index of
each IoT device based on the AoI of each device and the belief of
the channel state. Then, the destination schedules the $K$ IoT devices
with the largest Whittle's index.
\begin{figure}[!t]
\centering \includegraphics[width=0.99\columnwidth]{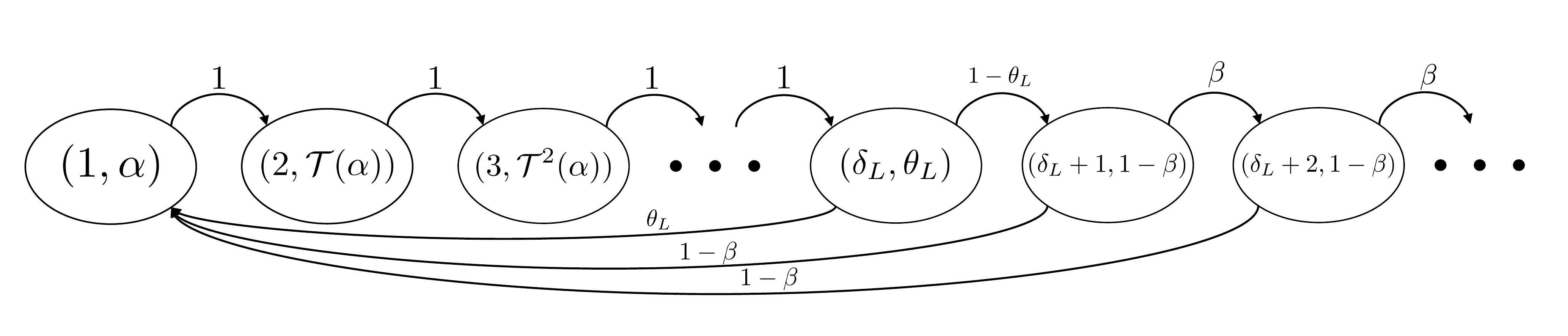}\caption{The state transition of the DTMC induced by policy $h$.}
\label{fig:The-state-transition-heuristic}
\end{figure}
\vspace{-0.5cm}

\subsection{Whittle-Like Index Policy}

In this subsection, we address the computational challenge associated
with calculating the Whittle's index based on RVI, a process that
becomes prohibitively complex as the state space grows. Our approach
is to derive a closed-form expression for a Whittle-like index, thereby
significantly reducing the computational overhead.

Following from Theorem \ref{thm:threshold}, we observe that the optimal
policy for the MDP exhibits a threshold structure with respect to
the belief state. Specifically, an arm is optimally scheduled when
its belief exceeds a certain threshold, which is dependent on both
the AoI and extra cost. In the decoupled problem, this threshold policy
defines the state $\bm{s}=(\delta_{L},\theta_{L})$ as the point at
which the arm is first activated. However, a challenge arises when
the channel remains in a BAD state after an arm is scheduled. In this
case, the system transitions to a new state $\bm{s}'=(\delta_{L}+1,1-\beta)$,
but the optimal action for this new state may not be immediately known
based on the threshold policy alone. To address this challenge, we
investigate a heuristic policy, denoted as $\pi^{\mathrm{H}}$,
which dictates that the arm continues to be scheduled from state $\bm{s}=(\delta_{L},\theta_{L})$
until a successful transmission occurs. This heuristic induces an
irreducible and aperiodic Discrete Time Markov Chain (DTMC), as illustrated
in Fig. \ref{fig:The-state-transition-heuristic}.

We denote $\varphi_{j}$ as the stationary distribution of the states
in this DTMC. The following lemma provides the closed-form expression
for this stationary distribution.
\begin{lem}
\label{lem:The-stationary-distribution}The stationary distribution
of states in Fig. \ref{fig:The-state-transition-heuristic} is given
by:
\begin{equation}
\varphi_{j}(\delta_{L},\theta_{L})=\begin{cases}
\frac{1-\beta}{(1-\beta)\delta_{L}+1-\theta_{L}}, & 1\leq j\leq\delta_{L},\\
\frac{(1-\beta)(1-\theta_{L})\beta^{j-\delta_{L}-1}}{(1-\beta)\delta_{L}+1-\theta_{L}}, & j\geq\delta_{L}+1.
\end{cases}\label{eq:stationary-distribution-heuristic}
\end{equation}
\end{lem}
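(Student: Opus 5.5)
The plan is to read off the transition structure of the DTMC in Fig. \ref{fig:The-state-transition-heuristic}, index its states by the AoI $j$, write the global balance equations, solve them recursively, and normalize.

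\textbf{Transition structure.} Under $\pi^{\mathrm{H}}$, a successful transmission moves the arm to $(1,\alpha)$. The arm then stays idle, following (\ref{eqMDPTrans}) with $u=0$, so the AoI increases by one each slot with probability one. This continues until the activation state $(\delta_{L},\theta_{L})$ is reached. Hence the states with $1\leq j\leq\delta_{L}$ form a deterministic path $j\to j+1$.

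At $j=\delta_{L}$ the arm is scheduled. By (\ref{eqMDPTrans}) it returns to state $1$ with probability $\theta_{L}$, and it moves to $(\delta_{L}+1,1-\beta)$ with probability $1-\theta_{L}$.

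For every $j\geq\delta_{L}+1$ the belief equals $1-\beta$, since the last observation was a failure and the arm keeps transmitting. Each such state therefore goes to state $1$ with probability $1-\beta$ and to $j+1$ with probability $\beta$. Because each AoI value corresponds to exactly one belief in this chain, indexing by $j$ alone is legitimate.

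\textbf{Balance equations.} The balance equations are
\begin{equation*}
\varphi_{j+1}=\varphi_{j}\ (1\leq j\leq\delta_{L}-1),\quad\varphi_{\delta_{L}+1}=(1-\theta_{L})\varphi_{\delta_{L}},\quad\varphi_{j+1}=\beta\varphi_{j}\ (j\geq\delta_{L}+1).
\end{equation*}
Solving recursively in terms of $\varphi_{1}$ gives $\varphi_{j}=\varphi_{1}$ for $j\leq\delta_{L}$, and $\varphi_{j}=\varphi_{1}(1-\theta_{L})\beta^{j-\delta_{L}-1}$ for $j\geq\delta_{L}+1$.

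I will also verify the remaining balance equation at state $1$, namely $\varphi_{1}=\theta_{L}\varphi_{\delta_{L}}+(1-\beta)\sum_{j>\delta_{L}}\varphi_{j}$. Using the geometric sum $\sum_{j>\delta_{L}}\varphi_{j}=\varphi_{1}(1-\theta_{L})/(1-\beta)$, the right-hand side equals $\theta_{L}\varphi_{1}+(1-\theta_{L})\varphi_{1}=\varphi_{1}$, so the system is consistent.

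\textbf{Normalization.} The normalization condition reads $\varphi_{1}\bigl[\delta_{L}+(1-\theta_{L})/(1-\beta)\bigr]=1$. This yields $\varphi_{1}=\frac{1-\beta}{(1-\beta)\delta_{L}+1-\theta_{L}}$, and substituting back gives exactly (\ref{eq:stationary-distribution-heuristic}).

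\textbf{Main obstacle.} The algebra is routine. The main care point is justifying that the stationary distribution exists and is unique. Every state reaches state $1$, and state $1$ reaches every state, so the chain is irreducible. It is aperiodic because state $1$ has return paths of lengths $\delta_{L}$ and $\delta_{L}+1$. It is positive recurrent because $\beta<1$ makes the geometric tail summable, so the return time to state $1$ has finite mean.
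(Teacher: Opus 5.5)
Your proposal is correct and follows the same route as the paper. Both write the balance equations of the chain in Fig.~\ref{fig:The-state-transition-heuristic} ($\varphi_j$ constant up to $\delta_L$, then a factor $1-\theta_L$ followed by geometric decay with ratio $\beta$) and fix the constant by normalization, with the paper parametrizing by $\varphi_{\delta_L}$ instead of $\varphi_1$. Your added checks of the balance equation at state $1$ and of irreducibility, aperiodicity and positive recurrence are left implicit in the paper, and they are sound.
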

\begin{IEEEproof}
For the DTMC in Fig. \ref{fig:The-state-transition-heuristic},
the balance equations give $\varphi_{j}=\varphi_{j-1}$ for $1<j\leq\delta_{L}$
and
\begin{equation}
\varphi_{j}=\varphi_{\delta_{L}}(1-\theta_{L})\beta^{j-\delta_{L}-1},\quad j\geq\delta_{L}+1.
\end{equation}
Using $\sum^{\infty}_{j=1}\varphi_{j}=1$, we obtain $\varphi_{\delta_{L}}=(1-\beta)/((1-\beta)\delta_{L}+1-\theta_{L})$.
Substituting this value into the two balance relations yields the
stated stationary distribution.
\end{IEEEproof}
Leveraging the stationary distribution from Lemma \ref{lem:The-stationary-distribution},
the average cost $\bar{C}_{\mathrm{H}}(\delta_{L},\theta_{L},W)$
of the heuristic policy $\pi^{\mathrm{H}}$ is calculated
as follows:
\begin{align}
 & \bar{C}_{\mathrm{H}}(\delta_{L},\theta_{L},W)=\sum^{\infty}_{j=1}j\varphi_{j}+W\sum^{\infty}_{j=\delta_{L}}\varphi_{j}\nonumber \\
 & =\varphi_{\delta_{L}}[\frac{\delta_{L}(\delta_{L}+1)}{2}+\frac{(1-\theta_{L})\beta}{(1-\beta)^{2}}\nonumber \\
 & +\frac{(1-\theta_{L})(\delta_{L}+1)}{1-\beta}+\frac{2-\theta_{L}-\beta}{1-\beta}W].\label{eq:average-cost-heuristic}
\end{align}
For clarity, we define $F_{1}(\delta_{L},\theta_{L})=\frac{1-\beta}{(1-\beta)\delta_{L}+1-\theta_{L}}$
$\left[\frac{\delta_{L}(\delta_{L}+1)}{2}+\frac{(1-\theta_{L})\beta}{(1-\beta)^{2}}+\frac{(1-\theta_{L})(\delta_{L}+1)}{1-\beta}\right]$
and $F_{2}(\delta_{L},\theta_{L})=\frac{2-\theta_{L}-\beta}{(1-\beta)\delta_{L}+1-\theta_{L}}$.
Then, the average cost can be expressed as:
\begin{equation}
\bar{C}_{\mathrm{H}}(\delta_{L},\theta_{L},W)=F_{1}(\delta_{L},\theta_{L})+F_{2}(\delta_{L},\theta_{L})W.\label{eq:average-cost-heuristic-simplify}
\end{equation}
For state $\bm{s}=(\delta_{L},\theta_{L})$ where the optimal action
is to be active, if an arm with such a state is not scheduled, it
will update to $\bm{s}'=(\delta_{L}+1,\mathcal{T}(\theta_{L})).$
Similar to the methodology in Theorem \ref{thm9:special_case}, we
construct the equation $\bar{C}_{\mathrm{H}}(\delta_{L},\theta_{L},W)=\bar{C}_{\mathrm{H}}(\delta_{L}+1,\mathcal{T}(\theta_{L}),W)$.
Solving this equation allows us to obtain the Whittle-like index,
which is given in the following theorem.
\begin{thm}
For any state $\bm{s}=(\delta,\theta)$, the corresponding Whittle-like
index can be calculated as follows:
\begin{equation}
W_{L}(\delta,\theta)=\frac{F_{1}(\delta+1,\mathcal{T}(\theta))-F_{1}(\delta,\theta)}{F_{2}(\delta,\theta)-F_{2}(\delta+1,\mathcal{T}(\theta))}.\label{eq:whittle_like_index}
\end{equation}
\end{thm}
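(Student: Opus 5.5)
The Whittle-like index is essentially defined as the root of $\bar{C}_{\mathrm{H}}(\delta,\theta,W)=\bar{C}_{\mathrm{H}}(\delta+1,\mathcal{T}(\theta),W)$. So the plan is to solve this linear equation in $W$ using the affine form (\ref{eq:average-cost-heuristic-simplify}), and then check that the solution is well defined.

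First, I will verify (\ref{eq:average-cost-heuristic}) from Lemma \ref{lem:The-stationary-distribution}, which fixes the coefficients. The first part is $\sum_{j\le\delta_L} j\varphi_{\delta_L}=\varphi_{\delta_L}\delta_L(\delta_L+1)/2$. The tail is $\sum_{j\ge\delta_L+1} j\varphi_{\delta_L}(1-\theta_L)\beta^{j-\delta_L-1}$. Writing $j=\delta_L+1+k$ and using $\sum_k\beta^k=1/(1-\beta)$ and $\sum_k k\beta^k=\beta/(1-\beta)^2$ gives the remaining two bracketed terms. The activation mass $\sum_{j\ge\delta_L}\varphi_j$ equals $\varphi_{\delta_L}\bigl(1+(1-\theta_L)/(1-\beta)\bigr)$, which is $\varphi_{\delta_L}(2-\theta_L-\beta)/(1-\beta)$. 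Hence $\bar C_{\mathrm H}=F_1+F_2W$ with $F_1,F_2$ exactly as defined.

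Second, I will substitute $(\delta_L,\theta_L)=(\delta,\theta)$ and $(\delta+1,\mathcal{T}(\theta))$ and equate. This gives $F_1(\delta,\theta)+F_2(\delta,\theta)W=F_1(\delta+1,\mathcal{T}(\theta))+F_2(\delta+1,\mathcal{T}(\theta))W$, whose unique solution is (\ref{eq:whittle_like_index}), provided the coefficient of $W$ does not vanish.

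The only real obstacle is showing that $F_2(\delta,\theta)>F_2(\delta+1,\mathcal{T}(\theta))$, so that the denominator is nonzero; this is also the natural sign for an index, since delaying activation lowers the activation frequency. I would argue this by cross-multiplying, which is legitimate because both denominators are positive as $\theta\le1$ and $\beta<1$. Set $D=(1-\beta)\delta+1-\theta$ and $\theta'=\mathcal{T}(\theta)$. The denominator for the delayed state is $D'=D+(1-\beta)+\theta-\theta'$. The target inequality is $(2-\theta-\beta)D'>(2-\theta'-\beta)D$.

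Expanding, this becomes $(2-\theta-\beta)\bigl[(1-\beta)+\theta-\theta'\bigr]>(\theta-\theta')D$. If $\theta\le\theta'$, the left side is positive and the right side is nonpositive, so the inequality holds. If $\theta>\theta'$, I would use positive correlation, $\alpha\ge1-\beta$, to get $\theta-\theta'=(1-\theta)(1-\beta)-\theta(1-\alpha)\le(1-\theta)(1-\beta)$, and then use $\mathcal{T}$'s contraction toward $\theta^*$ to bound the right side. I expect this case, where large $\delta$ inflates $D$, to need care. If it cannot be established in general, I would fall back on stating that the formula is the (affine) solution wherever $F_2(\delta,\theta)\neq F_2(\delta+1,\mathcal{T}(\theta))$. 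This suffices for the theorem as stated, which only asserts the index is computed by solving that equation.
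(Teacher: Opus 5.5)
Your proposal is correct and follows the paper's argument: the paper equates the affine costs $F_{1}+F_{2}W$ at $(\delta,\theta)$ and $(\delta+1,\mathcal{T}(\theta))$ and solves for $W$ without checking that the denominator is nonzero, so your fallback already matches its rigor, and your re-derivation of $F_{1},F_{2}$ from the stationary distribution is right. If you want to close the denominator step, note that your expression for $\theta-\theta'$ has the sign reversed ($\theta-\mathcal{T}(\theta)=\theta(1-\alpha)-(1-\theta)(1-\beta)=(2-\alpha-\beta)(\theta-\theta^{*})$, and $(1-\beta)+\theta-\theta'=\theta(2-\alpha-\beta)$, so the target becomes $(2-\theta-\beta)\theta>(\theta-\theta^{*})D$); for a free $\theta>\theta^{*}$ this fails once $\delta$ is large, so you must use that the only such belief in $\mathcal{B}_{\delta}$ is $\mathcal{T}^{\delta-1}(\alpha)$, and then, with $\rho=\alpha+\beta-1$, it reduces to $(1-\rho)+(1-\alpha)\sum_{k=1}^{\delta}\rho^{k}>\delta\rho^{\delta}(1-\alpha)$, which holds because $\rho^{k}\geq\rho^{\delta}$ for $k\leq\delta$.
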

\begin{IEEEproof}
The Whittle-Like index is obtained by solving the equation $\bar{C}_{\mathrm{H}}(\delta,\theta,W)=\bar{C}_{\mathrm{H}}(\delta+1,\mathcal{T}(\theta),W)$
for $W$. By substituting \eqref{eq:average-cost-heuristic}, we can
derive the closed-form expression for $W_{L}(\delta,\theta)$.
\end{IEEEproof}

This closed-form expression of the Whittle-like index enables the
implementation of a computationally efficient scheduling policy. Similar
to the classical Whittle's index policy, our approach is to compute
the Whittle-like index for each arm and then schedule $K$ IoT devices
with the largest Whittle-like indices. Since the Whittle-like index
can be directly obtained from the derived expression, this method
significantly reduces the computational complexity compared to the
Whittle's index policy.

\textbf{Remark:}{} The Whittle's
index policy and the Whittle-like index policy provide different implementation
tradeoffs. The Whittle's index policy is preferable when the system
size is moderate and the channel transition probabilities are stable,
so that the state-wise indices can be computed offline or updated
infrequently. In contrast, the Whittle-like index policy is more suitable
for large-scale, resource-constrained, or real-time IoT systems because
its closed-form expression can be evaluated directly from the current
AoI and belief state.

To quantify this tradeoff, let $N=|\mathcal{S}^{\hat{\delta}}|$
denote the number of states in the truncated single-arm MDP. Since
$|\mathcal{B}_{\delta}|=O(\delta)$, we have $N=O(\hat{\delta}^{2})$.
For the Whittle's index policy, the offline phase computes the state-wise
indices by Algorithms \ref{algorithm_of_RVI_structure} and \ref{alg:calculate-WI}.
Since each state-action pair has only a constant number of possible
next states in the truncated single-arm MDP, one RVI iteration has
time complexity $O(N)$, and the overall index computation has time
complexity $O(I_{\mathrm{S}}I_{\mathrm{RVI}}N^{2})$, where $I_{\mathrm{RVI}}$
and $I_{\mathrm{S}}$ denote the numbers of RVI and one-dimensional
search iterations, respectively. The offline storage for the value
functions and index tables is $O(MN)$ for $M$ device-specific channel
models. If multiple devices share the same channel parameters, the
corresponding index table can be reused, reducing the offline storage.
In the online phase, the scheduler maintains the current AoI and belief
state of each device, which requires $O(M)$ memory, and selects the
$K$ largest indices among $M$ devices, which can be implemented
in $O(M\log K)$ time. Thus, the total memory footprint of the Whittle's
index policy is $O(MN+M)=O(MN)$ in the device-specific case, or $O(N+M)$
when a shared index table can be reused.

In contrast, the Whittle-like index policy does not
have an offline index-computation phase, because its index is evaluated
from a closed-form expression. In the online phase, it computes one
index for each device and selects the $K$ largest indices, leading
to $O(M\log K)$ time complexity. Its memory footprint is $O(M)$,
since it only maintains the current AoI and belief states. Therefore,
the Whittle-like index policy is generally more suitable for large-scale
and real-time IoT implementations, while the Whittle's index policy
remains useful when offline computation and index storage are affordable.

\begin{algorithm}[tb]
\caption{\label{alg:Finding-the-extra}Finding an approximation to the extra
cost $W^{*}$}

\begin{algorithmic}[1]

\STATE \textbf{Initialization}: Set $W_{LB}=0$ and $W_{UB}=W_{LB}+d$.

\WHILE{ $J(W_{LB})\leq J(W_{UB})$ }

\STATE $W_{LB}=W_{UB}$.

\STATE $W_{UB}=W_{LB}+d$.

\ENDWHILE

\STATE $W_{LB}=W_{LB}-d,W_{UB}=W_{UB}$.

\WHILE {$W_{UB}-W_{LB}>\xi$}

\STATE $m_{1}=W_{LB}+\frac{W_{UB}-W_{LB}}{3}$, $m_{2}=W_{UB}-\frac{W_{UB}-W_{LB}}{3}$.

\IF{$J(m_{1})>J(m_{2})$}

\STATE $W_{LB}=W_{LB}$, $W_{UB}=m_{2}$.

\ELSE

\STATE$W_{LB}=m_{1}$, $W_{UB}=W_{UB}$.

\ENDIF

\ENDWHILE

\STATE $W^{\dagger}=\frac{W_{LB}+W_{UB}}{2}.$

\end{algorithmic}
\end{algorithm}

\subsection{Optimal Policy for The Relaxed Problem}

In this subsection, we study the optimal scheduling policy for the
relaxed partially observable RMAB problem, where the strict constraint
on the number of scheduled devices in each time slot is relaxed as
follows:
\begin{equation}
\mathbb{E}\left[\lim_{T\rightarrow\infty}\frac{1}{T}\sum^{M}_{i=1}\sum^{T}_{t=1}u_{i}(t)\right]=K.\label{eq:relax-cons}
\end{equation}
It is obvious that the performance of the optimal scheduling policy
for the above relaxed partially observable RMAB problem provides a
lower bound for the Whittle\textquoteright s index policy under the
strict constraint.{} This lower bound will be used
to assess how close the proposed index-based policies are to the relaxed
optimum.

We let $\bar{V}^{*}$ denote the minimum expected average cost that
can be obtained under the relaxed constraint given in (\ref{eq:relax-cons}).
According to the Lagrangian multiplier theorem, we have \cite{whittleRestlessBanditsActivity1988}
\begin{equation}
\bar{V}^{*}=\sup_{W}\left\{ \sum^{M}_{i=1}V^{*}_{i}(W)-WK\right\} ,\label{eq23:optimal_W}
\end{equation}
where $V^{*}_{i}(W)$ is the optimal average cost of the sub-problem
with extra cost $W$ for device $i$.\footnote{The value of $V^{*}_{i}(W)$ can be obtained via the RVI algorithm.}
We denote $W^{*}$ as the extra cost that achieves the supremum in
\eqref{eq23:optimal_W}. The optimal policy for the relaxed partially
observable RMBP problem with constraint (\ref{eq:relax-cons}) is
to schedule, in each time slot, the devices whose Whittle's index
is greater than $W^{*}$ \cite{5605371}. In general, it is challenging
to derive $W^{*}$ in closed-form. However, the properties of $V^{*}_{i}(W)$
allow us to design an efficient algorithm for computing $W^{*}$.
Similar to \cite{5605371} and \cite{whittleRestlessBanditsActivity1988},
$V^{*}_{i}(W)$ can be proved to be concave increasing in $W$(A
proof of this property is provided in Section \ref{subsec:Proof-for-ViW}).
Hence, $\sum^{M}_{i=1}V^{*}_{i}(W)-WK$ is a concave and unimodal
function. We let $J(W)=\sum^{M}_{i=1}V^{*}_{i}(W)-WK$ and propose
an algorithm with two steps to find $W^{*}$, as depicted in Algorithm
\ref{alg:Finding-the-extra}.{} Thus, Algorithm \ref{alg:Finding-the-extra}
can be viewed as a one-dimensional search for the Lagrange multiplier
that yields the relaxed-problem lower bound. The first step (lines
2\textasciitilde 6 in Algorithm \ref{alg:Finding-the-extra}) is
to determine the search region $[W_{LB},W_{UB}]$, where $W_{LB}$
and $W_{UB}$ are the lower bound and the upper bound of the region,
respectively. Both $W_{LB}$ and $W_{UB}$ are increased until $J(W_{LB})>J(W_{UB})$
and then we determine the search region such that it includes $W^{*}$.
The length of the search region, denoted by $d$, is fixed in this
step. The second step (lines 7\textasciitilde 14 in Algorithm \ref{alg:Finding-the-extra})
is to find $W^{*}$ by using Ternary Search in the region of $[W_{LB},W_{UB}]$.
We continually narrow the search region including $W^{*}$. By letting
the length of the search region small enough, i.e., $W_{UB}-W_{LB}\leq\xi,\forall\xi>0$,
we can find a $W^{\dagger}$ that is arbitrarily close to $W^{*}$
and hence use $W^{\dagger}$ to approximate $W^{*}$.

\subsection{Myopic Policy}

We propose a low-complexity myopic policy that only optimizes the
expected immediate reward and ignores the impact of the current action
on the future reward.{} Since it does not account for
the future evolution of the belief and AoI states, this policy provides
a simple baseline for evaluating the gain of the proposed index-based
policies. In particular, we define the expected post-action age $\tilde{\delta}_{i}(t)$
of device $i$ as the expected age after action $u_{i}(t)=1$ taken
in slot $t$. Given the current state $\bm{s}_{i}(t)=(\delta_{i}(t),\theta_{i}(t))$,
the expected post-action age is given by $\tilde{\delta}_{i}(t)=\theta_{i}(t)+(\delta_{i}(t)+1)(1-\theta_{i}(t))=\delta_{i}(t)+1-\delta_{i}(t)\theta_{i}(t)$.
Accordingly, we define the expected age difference of device $i$
in slot $t$ as $\delta_{i}(t)-\tilde{\delta}_{i}(t)=\delta_{i}(t)\theta_{i}(t)-1$.
Then, in each time slot, the destination calculates the expected age
difference for each device and schedules $K$ devices with the largest
expected age difference.

\section{Technical Proofs of Main Results\label{sec:technical-proofs}}

This section collects the detailed proofs of the main theoretical
results. 

\subsection{Proof of Lemma \ref{lem:lemma2:concavity} \label{subsec:Proof-of-Lemma2}}

Similar to the proof of Lemma \ref{lem:Lemma1:discount_monitonicity},
we use mathematical induction to prove Lemma \ref{lem:lemma2:concavity}.
We first show the monotonicity of the value function in the belief.
Let $\bm{s}_{1}=(\delta,\theta_{1}),$ $\bm{s}_{2}=(\delta,\theta_{2})\in\mathcal{S},$
where $\theta_{1}\leq\theta_{2}$. The statement holds at $n=0$ under
the zero initialization. Suppose that $V_{\mu,n}(\bm{s}_{1})\geq V_{\mu,n}(\bm{s}_{2})$
holds at iteration $n$, and verify whether it still holds at iteration
$n+1$. When $u=0$, we have
\begin{align}
Q_{\mu,n+1}(\bm{s}_{1},0) & =\delta+\mu V_{\mu,n}(\delta+1,\mathcal{T}(\theta_{1}))\nonumber \\
 & \geq\delta+\mu V_{\mu,n}(\delta+1,\mathcal{T}(\theta_{2}))=Q_{\mu,n+1}(\bm{s}_{2},0).
\end{align}
When $u=1$, the function $Q_{\mu,n+1}(\delta,\theta,1)$ is affine
in $\theta$ with slope $\mu[V_{\mu,n}(1,\alpha)-V_{\mu,n}(\delta+1,1-\beta)]$.
Since Lemma \ref{lem:Lemma1:discount_monitonicity} and the induction
hypothesis imply $V_{\mu,n}(1,\alpha)\leq V_{\mu,n}(\delta+1,\alpha)\leq V_{\mu,n}(\delta+1,1-\beta)$,
this slope is non-positive. Hence,
\begin{align}
Q_{\mu,n+1}(\bm{s}_{1},1)\geq Q_{\mu,n+1}(\bm{s}_{2},1).
\end{align}
Therefore, we have $V_{\mu,n+1}(\bm{s}_{1})\geq V_{\mu,n+1}(\bm{s}_{2})$.
When $n\rightarrow\infty,$ we conclude that the value function $V_{\mu}(\bm{s})$
is non-increasing with respect to $\theta$.

Next, we show the concavity of the value function in $\theta$. To
prove such property, it is sufficient to prove that for all $n\geq0,$
\begin{equation}
V_{\mu,n}(\delta,a\theta_{1}+(1-a)\theta_{2})\geq aV_{\mu,n}(\delta,\theta_{1})+(1-a)V_{\mu,n}(\delta,\theta_{2}),\label{eq:concave_equation}
\end{equation}
where $a\in[0,1]$. This inequality holds at $n=0$ under the zero
initialization. Suppose that \eqref{eq:concave_equation} holds at
iteration $n$, and consider iteration $n+1$ by examining the Q-functions.
For convenience, we let $\bm{s}_{1}=(\delta,\theta_{1})$, $\bm{s}_{2}=(\delta,\theta_{2})$,
and $\bm{s}_{3}=(\delta,a\theta_{1}+(1-a)\theta_{2})$. For $u=0$,
the linearity of $\mathcal{T}(\theta)$ and the induction hypothesis
give
\begin{align}
Q_{\mu,n+1}(\bm{s}_{3},0)= & \delta+\mu V_{\mu,n}(\delta+1,\mathcal{T}(a\theta_{1}+(1-a)\theta_{2}))\nonumber \\
= & \delta+\mu V_{\mu,n}(\delta+1,a\mathcal{T}(\theta_{1})+(1-a)\mathcal{T}(\theta_{2}))\nonumber \\
\geq & \delta+\mu aV_{\mu,n}(\delta+1,\mathcal{T}(\theta_{1}))\nonumber \\
 & +\mu(1-a)V_{\mu,n}(\delta+1,\mathcal{T}(\theta_{2}))\nonumber \\
= & aQ_{\mu,n+1}(\bm{s}_{1},0)+(1-a)Q_{\mu,n+1}(\bm{s}_{2},0)\nonumber \\
\geq & aV_{\mu,n+1}(\bm{s}_{1})+(1-a)V_{\mu,n+1}(\bm{s}_{2}).
\end{align}
For $u=1$, $Q_{\mu,n+1}(\delta,\theta,1)$ is affine in $\theta$.
Hence,
\begin{align}
Q_{\mu,n+1}(\bm{s}_{3},1)= & aQ_{\mu,n+1}(\bm{s}_{1},1)+(1-a)Q_{\mu,n+1}(\bm{s}_{2},1)\nonumber \\
\geq & aV_{\mu,n+1}(\bm{s}_{1})+(1-a)V_{\mu,n+1}(\bm{s}_{2}).
\end{align}
According to (\ref{eq:Bellman in VIA}), we can derive that $V_{\mu,n+1}(\bm{s}_{3})\geq aV_{\mu,n+1}(\bm{s}_{1})+(1-a)V_{\mu,n+1}(\bm{s}_{2}).$
When $n\rightarrow\infty,$ we conclude that $V_{\mu}(\bm{s}_{3})\geq aV_{\mu}(\bm{s}_{1})+(1-a)V_{\mu}(\bm{s}_{2}).$

\subsection{Proof of Theorem \ref{thm:existence_optimal}\label{subsec:Proof-of-Theorem1}}

\begin{figure}[t]
\centering

\includegraphics[width=0.48\textwidth]{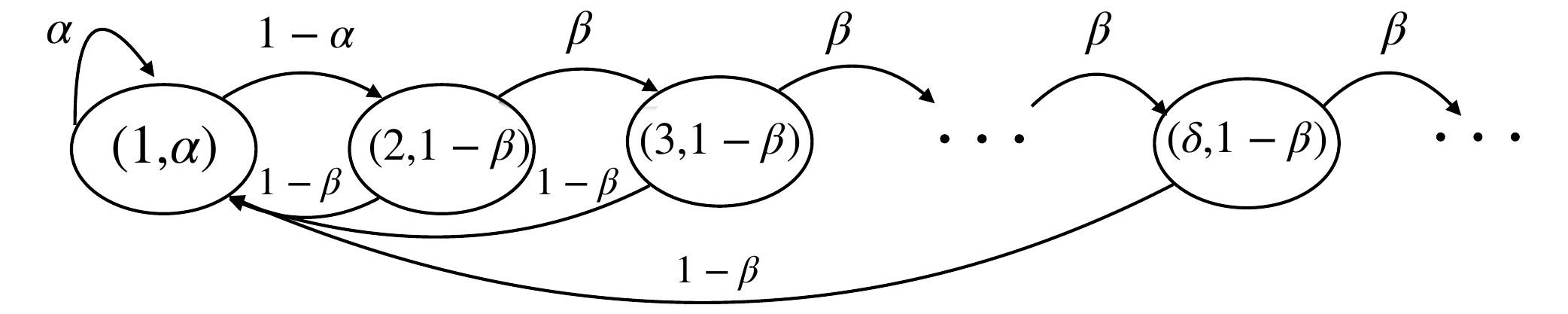}

\caption{The state transition of the DTMC induced by policy $f$.\label{fig:DTMC}}
\end{figure}
Let $h_{\mu}(\bm{s})=V_{\mu}(\bm{s})-V_{\mu}(\bm{s}_{0})$ be the
relative cost function, where $\bm{s}_{0}$ is a reference state.
According to \cite{sennottAverageCostOptimal1989}, Theorem \ref{thm:existence_optimal}
can be proved by verifying the following two conditions.
\begin{enumerate}
\item There exists a deterministic stationary policy of the average cost
MDP such that it induces an irreducible and aperiodic DTMC and its
average cost is finite.
\item There exists a non-negative $L$ such that $h_{\mu}(\bm{s})\geq-L$
for all $\bm{s}$ and $\mu$.
\end{enumerate}
We first verify condition 1. We consider a deterministic stationary
policy $f$ which schedules the device to update in each time slot.
The induced DTMC by policy $f$ is shown in Fig \ref{fig:DTMC}. It
is obvious that the induced DTMC is irreducible and aperiodic. We
let $\varphi_{\delta}$ be the stationary distribution of state $\bm{s}=(\delta,\theta)$,
which is given by
\begin{align}
\varphi_{\delta} & =\varphi_{1}(1-\alpha)\beta^{\delta-2},\text{ for all }\delta=2,3,\ldots,
\end{align}
where $\varphi_{1}=\frac{1-\beta}{2-\alpha-\beta}.$ Hence, the average
cost of the MDP under policy $f$ can be calculated as:
\begin{align}
\bar{C} & =\sum^{\infty}_{\delta=1}\varphi_{\delta}(\delta+W)=W+\frac{(2-\beta)(\beta^{2}+1-\alpha)}{(1-\beta)(2-\alpha-\beta)}<\infty.
\end{align}

We next verify condition 2. We set the reference state as $\bm{s}_{0}=(1,\alpha)$.
 By Lemma \ref{lem:lemma2:concavity}, $V_{\mu}(\delta,\theta)$
is non-increasing in $\theta$ for any given $\delta$, and by Lemma
\ref{lem:Lemma1:discount_monitonicity}, $V_{\mu}(\delta,\theta)$
is non-decreasing in $\delta$ for any given $\theta$. Therefore,
for any state $\bm{s}=(\delta,\theta)$, we have $V_{\mu}(\delta,\theta)\ge V_{\mu}(1,\theta)\ge V_{\mu}(1,\alpha)=V_{\mu}(\bm{s}_{0})$.
Letting $L=0$ gives $h_{\mu}(\bm{s})=V_{\mu}(\bm{s})-V_{\mu}(\bm{s}_{0})\geq0$.
Hence, condition 2 also holds, and the existence of an average cost
optimal stationary deterministic policy follows.

\subsection{Proof of Theorem \ref{thm:threshold}\label{subsec:Proof-of-Theorem2}}

We first establish the corresponding structural
property for the discounted-cost problem.
\begin{lem}
\label{lem:threshold}For the positively correlated Markov channel,
the following properties hold \cite{yaoAgeOptimalLowPowerStatus2020}.

(a) For any $\delta$, $\theta$, and $0\leq k\leq\delta-1$,
\begin{align}
(1-\theta)W+\theta V_{\mu}(\delta,\mathcal{T}^{k}(\alpha))+(1-\theta)V_{\mu}(\delta,\mathcal{T}^{k}(1-\beta))\nonumber \\
\geq V_{\mu}(\delta,\mathcal{T}^{k+1}(\theta)).
\end{align}

(b) The discounted cost optimal policy corresponding to $V_{\mu}(\bm{s})$
is of threshold-type in $\theta$. That is, for any $\delta$ and
$\theta_{1}\leq\theta_{2}$, if it is optimal to take action $u=1$
at $(\delta,\theta_{1})$, then it is also optimal to take action
$u=1$ at $(\delta,\theta_{2})$.
\end{lem}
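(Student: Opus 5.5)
The plan is to prove both parts by induction on the value-iteration index $n$ of Proposition \ref{prop:discounted cost}(b) and then let $n\rightarrow\infty$. Parts (a) and (b) are proved jointly. The threshold property at iteration $n$ is needed to limit the case analysis in (a) at iteration $n+1$, and (a) at iteration $n$ is needed for (b) at iteration $n+1$. At $n=0$ both statements are trivial: $V_{\mu,0}\equiv0$ and $W\geq0$, so the left side of (a) is non-negative and the right side is zero.

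Throughout I use one identity. Since $\mathcal{T}$ is affine, $\mathcal{T}^{k+1}(\theta)=\mathcal{T}^{k}(\mathcal{T}(\theta))=\theta\mathcal{T}^{k}(\alpha)+(1-\theta)\mathcal{T}^{k}(1-\beta)$. Hence the belief on the right of (a) is exactly the $\theta$-mixture of the two beliefs on the left. Positive correlation gives $\mathcal{T}^{k}(\alpha)\geq\mathcal{T}^{k}(1-\beta)$.

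For the inductive step of (a), write $\bm{s}_{a}=(\delta,\mathcal{T}^{k}(\alpha))$, $\bm{s}_{b}=(\delta,\mathcal{T}^{k}(1-\beta))$ and $\bm{s}_{m}=(\delta,\mathcal{T}^{k+1}(\theta))$. I replace each $V_{\mu,n+1}$ on the left by its minimizing Q-value and bound $V_{\mu,n+1}(\bm{s}_{m})$ from above by the Q-value of a well-chosen action. The cases are as follows.

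\emph{Both idle.} The left side becomes $\delta+(1-\theta)W+\mu[\theta V_{\mu,n}(\delta+1,\mathcal{T}^{k+1}(\alpha))+(1-\theta)V_{\mu,n}(\delta+1,\mathcal{T}^{k+1}(1-\beta))]$. Applying the induction hypothesis with $(\delta+1,k+1)$ bounds the bracket from below by $V_{\mu,n}(\delta+1,\mathcal{T}^{k+2}(\theta))-(1-\theta)W$. Since $\mu\leq1$, the leftover $(1-\theta)W(1-\mu)\geq0$, so the left side is at least $Q_{\mu,n+1}(\bm{s}_{m},0)$.

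\emph{Both active.} $Q_{\mu,n+1}(\cdot,1)$ is affine in $\theta$, so the left side equals $(1-\theta)W+Q_{\mu,n+1}(\bm{s}_{m},1)\geq V_{\mu,n+1}(\bm{s}_{m})$.

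\emph{Active at $\bm{s}_{b}$, idle at $\bm{s}_{a}$.} This contradicts the threshold property at iteration $n+1$. So I will order the argument to first establish (b) at level $n+1$ from (a) at level $n$, and then (a) at level $n+1$.

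\emph{Active at $\bm{s}_{a}$, idle at $\bm{s}_{b}$.} This is the main obstacle. Using the active action at $\bm{s}_{m}$ and affinity, the claim reduces to
\begin{equation*}
W+Q_{\mu,n+1}(\bm{s}_{b},0)-Q_{\mu,n+1}(\bm{s}_{b},1)\geq0,
\end{equation*}
which is not immediate, since $Q_{\mu,n+1}(\bm{s}_{b},0)\geq Q_{\mu,n+1}(\bm{s}_{b},1)$ is false by optimality of idling there. I would first try to use the other bound instead: take the idle action at $\bm{s}_{m}$, and use the identity together with the induction hypothesis at $(\delta+1,k+1)$. The term $\theta Q_{\mu,n+1}(\bm{s}_{a},1)$ would be compared with $\theta Q_{\mu,n+1}(\bm{s}_{a},0)$ plus the surplus $(1-\theta)W$. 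If needed, I would split $\theta$ into the part where the idle comparison works and the part where the active one does, using concavity (Lemma \ref{lem:lemma2:concavity}) to interpolate.

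For (b), let $D_{n+1}(\theta)=Q_{\mu,n+1}((\delta,\theta),1)-Q_{\mu,n+1}((\delta,\theta),0)$. I will show that $D_{n+1}\leq0$ at $\theta_{1}$ implies $D_{n+1}\leq0$ at every larger $\theta_{2}$ in $\mathcal{B}_{\delta}$. The active term is affine and non-increasing in $\theta$, with slope $\mu[V_{\mu,n}(1,\alpha)-V_{\mu,n}(\delta+1,1-\beta)]\leq0$ by Lemmas \ref{lem:Lemma1:discount_monitonicity} and \ref{lem:lemma2:concavity}. The idle term $\mu V_{\mu,n}(\delta+1,\mathcal{T}(\theta))$ is concave and non-increasing, so the difference alone is not monotone. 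Here I use (a) at level $n$: every $\theta\in\mathcal{B}_{\delta}$ has the form $\mathcal{T}^{k}(\alpha)$ or $\mathcal{T}^{k}(1-\beta)$, so writing $\theta_{1}$ as a mixture and applying (a) bounds the decrease of the idle term between consecutive belief points by the decrease of the affine active term. This yields the single-crossing property.

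Finally, letting $n\rightarrow\infty$ via Proposition \ref{prop:discounted cost}(b) transfers both inequalities to $V_{\mu}$ and $Q_{\mu}$. This completes the lemma, and Theorem \ref{thm:threshold} then follows by the vanishing-discount argument.
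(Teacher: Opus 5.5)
Your architecture matches the paper's. It runs a joint induction on $n$: (b) at level $n+1$ comes from (a) at level $n$, and then (a) at level $n+1$ is proved by cases on the optimal actions at $\bm{s}_a$ and $\bm{s}_b$, with the pair (idle at $\bm{s}_a$, active at $\bm{s}_b$) excluded by the threshold. Your both-idle and both-active cases are correct. The gap is that the case you call the main obstacle is left open, and your diagnosis of it is off. The reduced inequality $W+Q_{\mu,n+1}(\bm{s}_b,0)-Q_{\mu,n+1}(\bm{s}_b,1)\ge0$ carries the extra $W$, and it holds at \emph{every} state. For any $(\delta,x)$,
\[
Q_{\mu,n+1}((\delta,x),1)-Q_{\mu,n+1}((\delta,x),0)-W=\mu\bigl[xV_{\mu,n}(1,\alpha)+(1-x)V_{\mu,n}(\delta+1,1-\beta)-V_{\mu,n}(\delta+1,\mathcal{T}(x))\bigr]\le0 .
\]
This holds because $\mathcal{T}(x)=x\alpha+(1-x)(1-\beta)$. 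Concavity of $V_{\mu,n}(\delta+1,\cdot)$ then gives $V_{\mu,n}(\delta+1,\mathcal{T}(x))\ge xV_{\mu,n}(\delta+1,\alpha)+(1-x)V_{\mu,n}(\delta+1,1-\beta)$, and Lemma~\ref{lem:Lemma1:discount_monitonicity} gives $V_{\mu,n}(\delta+1,\alpha)\ge V_{\mu,n}(1,\alpha)$. In words: without the activation charge, transmitting is never worse than idling. Using this, the bound $V_{\mu,n+1}(\bm{s}_m)\le Q_{\mu,n+1}(\bm{s}_m,1)=\theta Q_{\mu,n+1}(\bm{s}_a,1)+(1-\theta)Q_{\mu,n+1}(\bm{s}_b,1)$ closes every case in which $u=1$ is optimal at $\bm{s}_a$, whatever the action at $\bm{s}_b$. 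This is the step the paper attributes to Lemmas~\ref{lem:Lemma1:discount_monitonicity} and~\ref{lem:lemma2:concavity}.

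Your fallback through the idle action at $\bm{s}_m$ does not work. After the induction hypothesis, only $(1-\mu)(1-\theta)W$ is left over, and it would have to absorb $\theta\,[Q_{\mu,n+1}(\bm{s}_a,0)-Q_{\mu,n+1}(\bm{s}_a,1)]\ge0$, which it cannot do in general. Splitting $\theta$ and invoking concavity does not help either, because concavity bounds $V_{\mu,n+1}(\bm{s}_m)$ from below, not from above.

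Part (b) is also only asserted. You claim that (a) bounds the drop of the idle term between consecutive belief points by the drop of the active term, i.e., that $D_{n+1}$ is non-increasing along $\mathcal{B}_\delta$. This is never derived, and it is not what single crossing rests on. The missing fact is that on the extended interval $[0,1]$, $D_{n+1}$ is \emph{convex}: it is an affine function minus the concave function $\mu V_{\mu,n}(\delta+1,\mathcal{T}(\theta))$. Hence $\{\theta: D_{n+1}(\theta)\le0\}$ is an interval, and $D_{n+1}(0)=W\ge0$.
\begin{itemize}
\item If $D_{n+1}(1)<0$, i.e., $W<\mu[V_{\mu,n}(\delta+1,\alpha)-V_{\mu,n}(1,\alpha)]$, this interval contains $1$. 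It is therefore an upper interval, which is exactly a threshold.
\item Otherwise, (a) at level $n$ with AoI $\delta+1$ and $k=0$ gives $D_{n+1}(\theta)\ge(1-\mu)(1-\theta)W\ge0$ for all $\theta$, so idling is optimal throughout.
\end{itemize}
This is the paper's concave-versus-linear boundary comparison. With it and the inequality above, your induction closes.
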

\begin{IEEEproof}
The lemma follows by value-iteration induction. With $V_{\mu,0}(\bm{s})=0$,
both properties hold at $n=0$. Suppose that they hold up to iteration
$n$. To prove the threshold property at $n+1$, compare the two state-action
value functions $Q_{\mu,n+1}(\delta,\theta,0)$ and $Q_{\mu,n+1}(\delta,\theta,1)$
over the extended belief interval $[0,1]$. By Lemma \ref{lem:lemma2:concavity},
the former is concave in $\theta$, while the latter is linear in
$\theta$. At the two boundary points, we have
\begin{align}
Q_{\mu,n+1}(\delta,0,0) & =\delta+\mu V_{\mu,n}(\delta+1,1-\beta),\label{eq:Q0}\\
Q_{\mu,n+1}(\delta,0,1) & =\delta+W+\mu V_{\mu,n}(\delta+1,1-\beta),\\
Q_{\mu,n+1}(\delta,1,0) & =\delta+\mu V_{\mu,n}(\delta+1,\alpha),\\
Q_{\mu,n+1}(\delta,1,1) & =\delta+W+\mu V_{\mu,n}(1,\alpha).\label{eq:Q1}
\end{align}
Since $Q_{\mu,n+1}(\delta,0,0)\leq Q_{\mu,n+1}(\delta,0,1)$ for $W\geq0$,
the boundary comparison, together with property (a) at iteration $n$,
characterizes the action preference by a belief threshold. If $0\leq W<\mu V_{\mu,n}(\delta+1,\alpha)-\mu V_{\mu,n}(1,\alpha)$,
the two Q-functions intersect at a belief threshold, above which transmission
is optimal. Otherwise, property (a) at iteration $n$ implies $Q_{\mu,n+1}(\delta,\theta,0)\leq Q_{\mu,n+1}(\delta,\theta,1)$
for all $\theta\in[0,1]$, so idling is optimal for all beliefs. Thus,
property (b) holds at $n+1$.

Given property (b) at iteration $n+1$, property (a) at $n+1$ follows
by considering the optimal actions at $(\delta,\mathcal{T}^{k}(\alpha))$
and $(\delta,\mathcal{T}^{k}(1-\beta))$. The action pair $(0,1)$
is excluded by the threshold structure, and the remaining cases follow
from Lemmas \ref{lem:Lemma1:discount_monitonicity} and \ref{lem:lemma2:concavity}
together with the induction hypothesis. Letting $n\rightarrow\infty$
yields both properties in Lemma \ref{lem:threshold}.
\end{IEEEproof}
We now prove the threshold structure of the average cost optimal policy
based on Lemma \ref{lem:threshold}. Let $\{\mu_{n}\}^{\infty}_{n=1}$
be a sequence of discount factors converging to $1$ and $\pi^{*}_{\mu_{n}}$
be the discounted cost optimal policy associated with $\mu_{n}$.
According to \cite{sennottAverageCostOptimal1989}, since the average
cost optimal policy exists by Theorem \ref{thm:existence_optimal},
there exists a subsequence $\{\mu'_{n}\}^{\infty}_{n=1}$ such that
an average cost optimal policy is a limit point of $\pi^{*}_{\mu'_{n}}$.
Because each discounted-cost optimal policy in this subsequence has
a threshold structure in $\theta$, the limiting average cost optimal
policy also has a threshold structure.

\subsection{Proof of Theorem \ref{thm:indexability}\label{subsec:Proof-of-Theorem8}}

We prove the indexability by showing that the passive set for any
given $\delta$ increases monotonically from the empty set to the
entire space of $\theta$, i.e., $B_{\delta}$, with the increasing
of the extra cost W. As in the proof of Lemma \ref{lem:threshold},
we extend the space of $\theta$ to $[0,1]$ for ease of exposition.
Based on the threshold structure  given in Theorem \ref{thm:threshold},
the indexability can be proved by showing the monotonicity of the
threshold. In other words, if the threshold is monotonically increasing
with the cost $W$ for $0\leq W<\mu V_{\mu,n}(\delta+1,\alpha)-\mu V_{\mu,n}(1,\alpha)$,
then the sub-problem is indexable.

To verify this, we focus on the discounted cost case. According to
Eqs. (\ref{eq:Q0})-(\ref{eq:Q1}) in the proof of Lemma \ref{lem:threshold},
it is easy to see that the optimal policy is to always transmit if
$W<0$ since $Q_{\mu}(\delta,0,0)>Q_{\mu}(\delta,0,1)$ and $Q_{\mu}(\delta,1,0)>Q_{\mu}(\delta,1,1)$.
Hence, the passive set is empty in this case. When $0\leq W<\mu V_{\mu}(\delta+1,\alpha)-\mu V_{\mu}(1,\alpha)$,
there is an intersection between two state-action value functions,
which is in fact the threshold. Suppose that $W_{1},W_{2}\in[0,\mu V_{\mu}(\delta+1,\alpha)-\mu V_{\mu}(1,\alpha))$
and $W_{1}\leq W_{2}$. The threshold associated with $W_{1}$ is
denoted by $\theta^{*}_{W_{1}}$, and we have $Q(\delta,\theta^{*}_{W_{1}},0)=Q(\delta,\theta^{*}_{W_{1}},1)$,
i.e.,
\begin{align}
 & \mu V_{\mu}(\delta+1,\mathcal{T}(\theta^{*}_{W_{1}}))\nonumber \\
= & W_{1}+\mu\theta^{*}_{W_{1}}V_{\mu}(1,\alpha)+\mu(1-\theta^{*}_{W_{1}})V_{\mu}(\delta+1,1-\beta).
\end{align}
When the extra cost increases to $W_{2}$, the optimal action at state
$(\delta,\theta^{*}_{W_{1}})$ is to idle since 
\begin{align}
 & Q(\delta,\theta^{*}_{W_{1}},0)\nonumber \\
= & \delta+\mu V(\delta+1,\mathcal{T}(\theta^{*}_{W_{1}}))\nonumber \\
< & \delta+W_{2}+\mu\theta^{*}_{W_{1}}V_{\mu}(1,\alpha)+\mu(1-\theta^{*}_{W_{1}})V_{\mu}(\delta+1,1-\beta)\nonumber \\
= & Q(\delta,\theta^{*}_{W_{1}},1).
\end{align}
Then, we have $\theta^{*}_{W_{1}}\leq\theta^{*}_{W_{2}}$. Finally,
when $W\geq\mu V_{\mu,n}(\delta+1,\alpha)-\mu V_{\mu,n}(1,\alpha)$,
the optimal action is to never transmit according to Lemma \ref{lem:threshold},
and the passive set is equal to the entire space of $\theta$. 

\subsection{Proof of Theorem \ref{thm9:special_case}\label{subsec:Proof-of-Theorem9}}

It is challenging to derive the Whittle's index in closed-form according
to its definition. However, we can dispense with this difficulty by
deriving the average cost of the threshold policy. In particular,
when $\alpha=1-\beta$, the state only consists of the AoI. As proved
in Lemma 1 of \cite{9238787}, the average-cost optimal policy is
of threshold-type in AoI. Then, we can use a Discrete Time Markov
Chain (DTMC) in Fig. \ref{fig:The-DTMC-of-thresholdpolicy} to model
the MDP constructed by the threshold policy with the threshold of
$n$.
\begin{figure}[t]
\centering\includegraphics[width=0.49\textwidth]{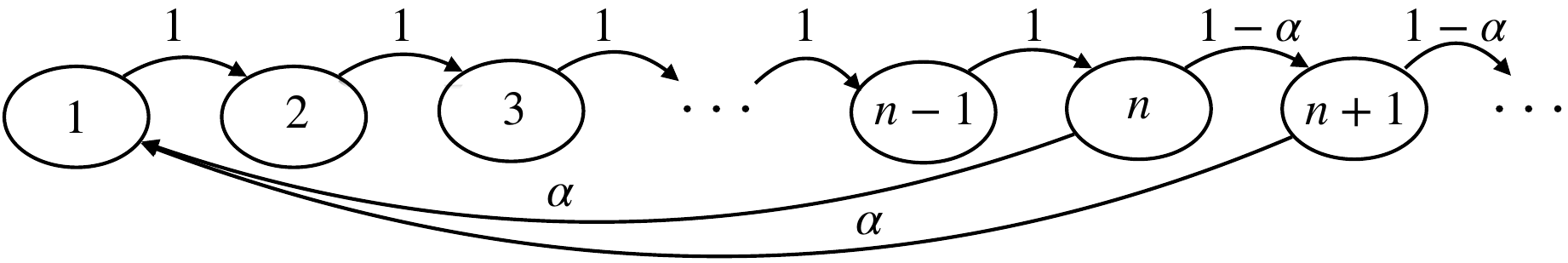}

\caption{The states transition under a threshold policy with the threshold
$n$.\label{fig:The-DTMC-of-thresholdpolicy}}
\end{figure}

We let $\eta_{j}(n)$ be the steady state probability of state $j$
under the threshold of $n$. The balance equations of the DTMC are
\begin{equation}
\eta_{j}(n)=\eta_{j-1}(n),\quad2\leq j\leq n\label{eq:Stationary_distribution_relationship1}
\end{equation}
and 
\begin{equation}
\eta_{j}(n)=(1-\alpha)\eta_{j-1}(n),\quad j\geq n+1.\label{eq:Stationary_distribution_relationship2}
\end{equation}
Since $\sum^{\infty}_{j=1}\eta_{j}(n)=1$, the steady state probability
of all states can be solved as
\begin{equation}
\eta_{j}(n)=\begin{cases}
\frac{\alpha}{n\alpha+1-\alpha}, & 1\leq j\leq n,\\
\frac{\alpha(1-\alpha)^{j-n}}{n\alpha+1-\alpha}, & j\geq n+1.
\end{cases}
\end{equation}
Then, the average cost of the MDP under the threshold policy can be
calculated as
\begin{align}
\bar{C}(n,W)= & \sum^{\infty}_{j=1}j\eta_{j}(n)+\sum^{\infty}_{j=n}W\eta_{j}(n)\label{eq:equation-of-average-cost}\\
= & \frac{n^{2}\alpha^{2}-n\alpha^{2}+2n\alpha-2\alpha+2}{2\alpha(n\alpha+1-\alpha)}+\frac{W}{\alpha n+1-\alpha}.\nonumber 
\end{align}
Then, the Whittle's index $W_{\delta}$ for state $\delta$ is the
intersection points of $\bar{C}(\delta,W)$ and $\bar{C}(\delta+1,W)$.
Particularly, by solving $\bar{C}(\delta,W)=\bar{C}(\delta+1,W)$,
we have 
\begin{equation}
W_{\delta}=\delta+\frac{\delta(\delta-1)}{2}\alpha.
\end{equation}

\subsection{Proof of Theorem \ref{thm:Approximat}\label{subsec:Proof-of-Theorem-approximate}}

For the approximate MDP with bound $\hat{\delta}$, let $V^{\hat{\delta}}_{\mu}(\bm{s})$
denote the minimum expected total $\mu$-discounted cost and $h^{\hat{\delta}}_{\mu}(\bm{s})=V^{\hat{\delta}}_{\mu}(\bm{s})-V^{\hat{\delta}}_{\mu}(\bm{s_{0}})$
denote the relative cost function. According to \cite{Sennott_on_computing_average_cost},
it suffices to verify: 1) There exists a nonnegative $L$ and a nonnegative
finite function $F(\cdot)$ on $\mathcal{S}$ such that $-L\leq h^{\hat{\delta}}_{\mu}(\bm{s})\leq F(\bm{s})$
for all $\bm{s}\in\mathcal{S}^{\hat{\delta}}$, where $\hat{\delta}=M+1,M+2,...$
and $\mu\in(0,1)$; and 2) $\limsup_{\hat{\delta}\rightarrow\infty}V^{(\hat{\delta})*}\leq V^{*}$.

To verify condition 1, we consider a policy $g$ that schedules the
user with equal probability in each time slot. We let $\bar{C}^{\hat{\delta}}_{\bm{s,s_{0}}}(g)$
and $\bar{C}_{\bm{s,s_{0}}}(g)$ denote the expected cost of the first
passage from state $\bm{s}$ to $\bm{s}_{0}$ by applying policy $g$
to the approximate MDP and the original MDP, respectively. Similar
to Section \ref{subsec:Proof-of-Theorem1}, we set $\bm{s_{0}}=(1,\alpha)$
and have $L=0$ and $\bar{C}_{\bm{s,s_{0}}}(g)<\infty$. Moreover,
since $h^{\hat{\delta}}_{\mu}(\bm{s})$ is the relative expected cost
under an optimal policy, we have $h^{\hat{\delta}}_{\mu}(\bm{s})\leq\bar{C}^{\hat{\delta}}_{\bm{s,s_{0}}}(g)$.
Then, we can choose $\bar{C}_{\bm{s,s_{0}}}(g)$ as $F(\bm{s})$ if
$\bar{C}^{\hat{\delta}}_{\bm{s,s_{0}}}(g)\leq\bar{C}_{\bm{s,s_{0}}}(g)$.
To show this, we first present the transition probability $P^{\hat{\delta}}_{\bm{s,s'}}(u)$
from state $\bm{s}$ to $\bm{s}'$ on state space $\mathcal{S^{\hat{\delta}}}$
with action $u$ as follows:
\begin{equation}
P^{\hat{\delta}}_{\bm{s,s'}}(u)=P_{\bm{s,s'}}(u)+\sum_{\bm{r}\in\mathcal{S\setminus S}^{\hat{\delta}}}P_{\bm{s,r}}(u)\mathbf{1}(v(\bm{r})=\bm{s'}),
\end{equation}
where  $P_{\bm{s,s'}}(u)$ and $P_{\bm{s,r}}(u)$ are the transition
probabilities of the original MDP, $\mathbf{1}(\cdot)$ is the indicator
function, and $v(\bm{r})$ is an association function that associates
a state $\bm{r}\in\mathcal{S}\setminus\mathcal{S}^{\hat{\delta}}$
to a state $\bm{s'}\in\mathcal{S}^{\hat{\delta}}$. Specifically,
$v((\delta,\theta))=(\min\{\delta,\hat{\delta}\},\theta+(\mathcal{T}^{\widehat{\delta}-1}(\alpha)-\theta){\bf 1}(\mathcal{T}^{\hat{\delta}-2}(1-\beta)<\theta<\mathcal{T}^{\hat{\delta}-1}(\alpha))$.
Moreover, we can verify that $\bar{C}_{(\delta,\theta)\bm{,s_{0}}}(g)$
is non-decreasing in $\delta$ with given $\theta$ and non-increasing
in $\theta$ with given $\delta$ in a similar way to the proof of
Lemmas \ref{lem:Lemma1:discount_monitonicity} and \ref{lem:lemma2:concavity}.
Then, we obtain that
\begin{align}
 & \sum_{\bm{s'}\in\mathcal{S}^{\hat{\delta}}}P^{\hat{\delta}}_{\bm{s,s'}}(u)\bar{C}_{\bm{s',s_{0}}}(g)\nonumber \\
= & \sum_{\bm{s'}\in\mathcal{S}^{\hat{\delta}}}\left[P_{\bm{s,s'}}(u)+\sum_{\bm{r}\in\mathcal{S\setminus S}^{\hat{\delta}}}P_{\bm{s,r}}(u)\mathbf{1}(v(\bm{r})=\bm{s'})\right]\bar{C}_{\bm{s',s_{0}}}(g)\nonumber \\
\leq & \sum_{\bm{s'}\in\mathcal{S}^{\hat{\delta}}}P_{\bm{s,s'}}(u)\bar{C}_{\bm{s',s_{0}}}(g)+\sum_{\bm{r}\in\mathcal{S\setminus S}^{\hat{\delta}}}P_{\bm{s,r}}(u)\bar{C}_{\bm{r,s_{0}}}(g)\nonumber \\
= & \sum_{\bm{s'}\in\mathcal{S}}P_{\bm{s,s'}}(u)\bar{C}_{\bm{s',s_{0}}}(g).\label{eq:approximate_1}
\end{align}

Based on the above inequality, we derive that
\begin{align}
\bar{C}^{\hat{\delta}}_{\bm{s,s_{0}}}(g) & =\mathbb{E}_{g}\left[C(\bm{s},u)+\sum_{\bm{s'}\in\mathcal{S}^{\hat{\delta}}}P^{\hat{\delta}}_{\bm{s,s'}}(u)\bar{C}_{\bm{s',s_{0}}}(g)\right]\nonumber \\
 & \leq\mathbb{E}_{g}\left[C(\bm{s},u)+\sum_{\bm{s'}\in\mathcal{S}}P_{\bm{s,s'}}(u)\bar{C}_{\bm{s',s_{0}}}(g)\right]\nonumber \\
 & =\bar{C}_{\bm{s,s_{0}}}(g).
\end{align}
Therefore, condition 1 is verified to be true.

Condition 2 can be expressed as follows:
\begin{equation}
V^{(\hat{\delta})*}=\lim_{\mu\rightarrow1}(1-\mu)V^{\hat{\delta}}_{\mu}(\bm{s})\leq\lim_{\mu\rightarrow1}(1-\mu)V_{\mu}(\bm{s})=V^{*}.
\end{equation}

To verify condition 2, we first derive the following inequality similar
to \eqref{eq:approximate_1}.
\begin{align}
 & \sum_{\bm{s'}\in\mathcal{S}^{\hat{\delta}}}P^{\hat{\delta}}_{\bm{s,s'}}(u)V_{\mu}(\bm{s'})\nonumber \\
= & \sum_{\bm{s'}\in\mathcal{S}^{\hat{\delta}}}\left[P_{\bm{s,s'}}(u)+\sum_{\bm{r}\in\mathcal{S\setminus S}^{\hat{\delta}}}P_{\bm{s,r}}(u)\mathbf{1}(v(\bm{r})=\bm{s'})\right]V_{\mu}(\bm{s'})\nonumber \\
\stackrel{(a)}{\leq} & \sum_{\bm{s'}\in\mathcal{S}^{\hat{\delta}}}P_{\bm{s,s'}}(u)V_{\mu}(\bm{s'})+\sum_{\bm{r}\in\mathcal{S\setminus S}^{\hat{\delta}}}P_{\bm{s,r}}(u)V_{\mu}(\bm{r})\nonumber \\
= & \sum_{\bm{s'}\in\mathcal{S}}P_{\bm{s,s'}}(u)V_{\mu}(\bm{s'}),\label{eq:approximate_2_value}
\end{align}
where (a) is due to the monotonicity of the value function given in
Lemmas \ref{lem:Lemma1:discount_monitonicity} and \ref{lem:lemma2:concavity}.

Then, we use mathematical induction to prove that $V^{\hat{\delta}}_{\mu}(\bm{s})\leq V_{\mu}(\bm{s})$
holds. When $n=0$, $V^{\hat{\delta}}_{\mu,0}(\bm{s})\leq V_{\mu,0}(\bm{s})$
holds obviously. Suppose that $V^{\hat{\delta}}_{\mu,n}(\bm{s})\leq V_{\mu,n}(\bm{s})$,
and then we investigate whether such inequality holds at the next
iteration. Particularly, we have
\begin{align}
V^{\hat{\delta}}_{\mu,n+1}(\bm{s}) & =\min_{u\in\{0,1\}}\left\{ C(\bm{s},u)+\mu\sum_{\bm{s'}\in\mathcal{S}^{\hat{\delta}}}P^{\hat{\delta}}_{\bm{s,s'}}(u)V^{\hat{\delta}}_{\mu,n}(\bm{s'})\right\} \nonumber \\
 & \stackrel{(a)}{\leq}\min_{u\in\{0,1\}}\left\{ C(\bm{s},u)+\mu\sum_{\bm{s'}\in\mathcal{S}^{\hat{\delta}}}P^{\hat{\delta}}_{\bm{s,s'}}(u)V_{\mu,n}(\bm{s'})\right\} \nonumber \\
 & \stackrel{(b)}{\leq}\min_{u\in\{0,1\}}\left\{ C(\bm{s},u)+\mu\sum_{\bm{s'}\in\mathcal{S}}P_{\bm{s,s'}}(u)V_{\mu,n}(\bm{s'})\right\} \nonumber \\
 & =V_{\mu,n+1}(\bm{s}),
\end{align}
where (a) is due to the induction hypothesis and (b) is derived from
Eq. \eqref{eq:approximate_2_value}. With $n\rightarrow\infty,$ we
can derive that $V^{\hat{\delta}}_{\mu}(\bm{s})\leq V_{\mu}(\bm{s})$.

Since the above two conditions hold, we have that $V^{(\hat{\delta})}\rightarrow V^{*}$
when $\hat{\delta}\rightarrow\infty.$ 

\subsection{Proof for The Property of $V^{*}_{i}(W)$\label{subsec:Proof-for-ViW}}

It is easy to see that $V^{*}_{i}(W)$ is an increasing function of
$W$, since $V^{*}_{i}(W)$ is the optimal average cost, which is
given by 
\begin{equation}
V^{*}_{i}(W)=\min_{\pi}\limsup_{T\rightarrow\infty}\frac{1}{T}\mathbb{E}_{\pi}\left[\sum^{T}_{t=1}(\delta(t)+Wu(t))\mid\bm{s}\right].
\end{equation}

Then, we show that $V^{*}_{i}(W)$ is concave in $W$, i.e., for any
$W_{1}\geq0,W_{2}\geq0$,
\begin{equation}
V^{*}_{i}(aW_{1}+(1-a)W_{2})\geq aV^{*}_{i}(W_{1})+(1-a)V^{*}_{i}(W_{2}),\label{eq:concave in W}
\end{equation}
where $a\in[0,1]$. We let $\pi$ be the optimal policy under $aW_{1}+(1-a)W_{2}$
and apply it to the system with extra cost $W_{1}$. Since $\pi$
may not be optimal under extra cost $W_{1}$, we have 
\begin{align}
V^{*}_{i}(W_{1})\leq & V^{*}_{i}(aW_{1}+(1-a)W_{2})\\
 & +(1-a)(W_{1}-W_{2})\frac{\partial V^{*}_{i}(W)}{\partial W}\Big|_{W=aW_{1}+(1-a)W_{2}}.\nonumber 
\end{align}
Similarly, we have
\begin{align}
V^{*}_{i}(W_{2})\leq & V^{*}_{i}(aW_{1}+(1-a)W_{2})\\
 & +a(W_{2}-W_{1})\frac{\partial V^{*}_{i}(W)}{\partial W}\Big|_{W=aW_{1}+(1-a)W_{2}}.\nonumber 
\end{align}
 Altogether, we show that (\ref{eq:concave in W}) holds.

\section{Simulation Results}

\begin{figure}[t]
\centering

\includegraphics[width=0.9\columnwidth]{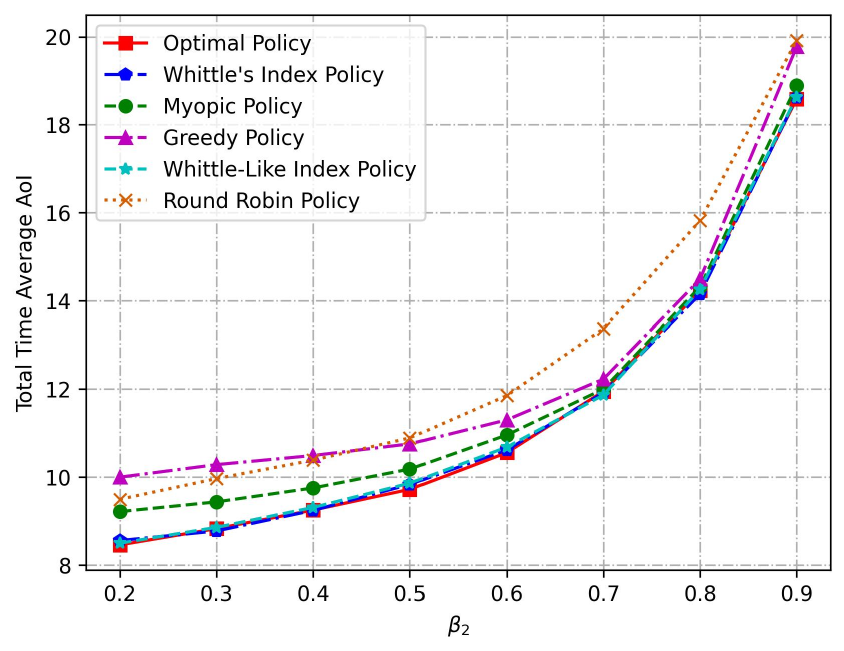}\vspace{-1em}

\caption{The total time-average AoI versus the channel state transition probability
of device 2 ($M=2$, $K=1,\alpha_{1}=0.3,\beta_{1}=0.8,$ and $\alpha_{2}=1.1-\beta_{2}$).\label{fig:beta2}}
\end{figure}
\begin{figure}[t]
\centering

\includegraphics[width=0.9\columnwidth]{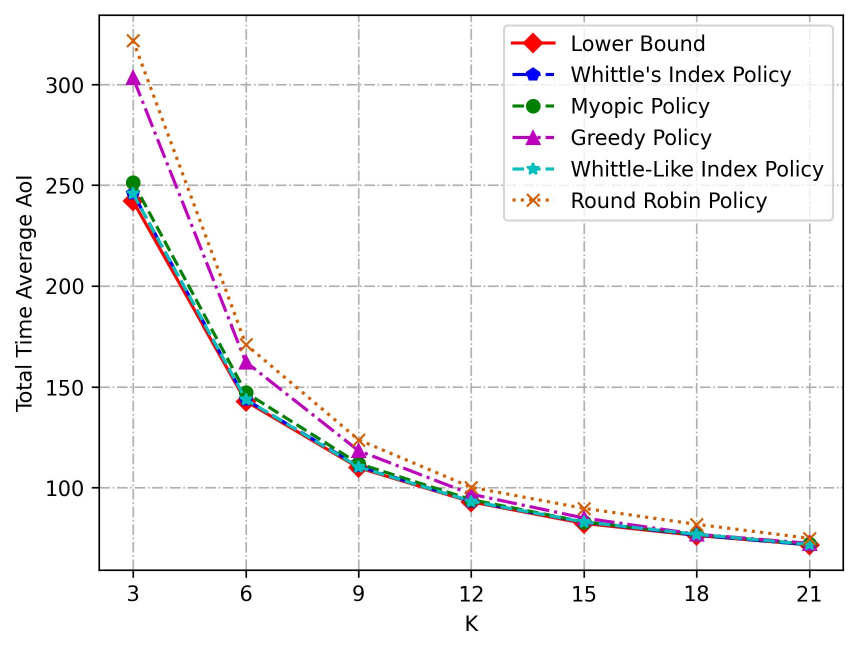}\vspace{-1em}

\caption{The total time-average AoI versus the number of scheduled devices
$K$ ($M=25$, $\alpha_{i}=\beta_{i}=0.7$).\label{fig:DifferentK}}
\end{figure}
In this section, we evaluate the performance of Whittle's index policy
and Whittle-like index policy with respect to different system parameters
and compare it with those of the greedy policy, myopic policy,{}
and Round Robin policy. Unless otherwise specified, each simulation
is run over $T=10^{5}$ time slots.  In greedy policy, the number
of $K$ devices with the largest AoI at the beginning of the slot
are scheduled, while the proposed myopic policy takes both the AoI
and the belief into consideration.{} The Round Robin
policy schedules $K$ devices cyclically in a fixed order, without
using AoI or channel-belief information. We also present the performance
of the optimal policy when the number of total devices $M$ is small
and the performance lower bound when $M$ is large.

\begin{figure}[t]
\centering

\includegraphics[width=0.9\columnwidth]{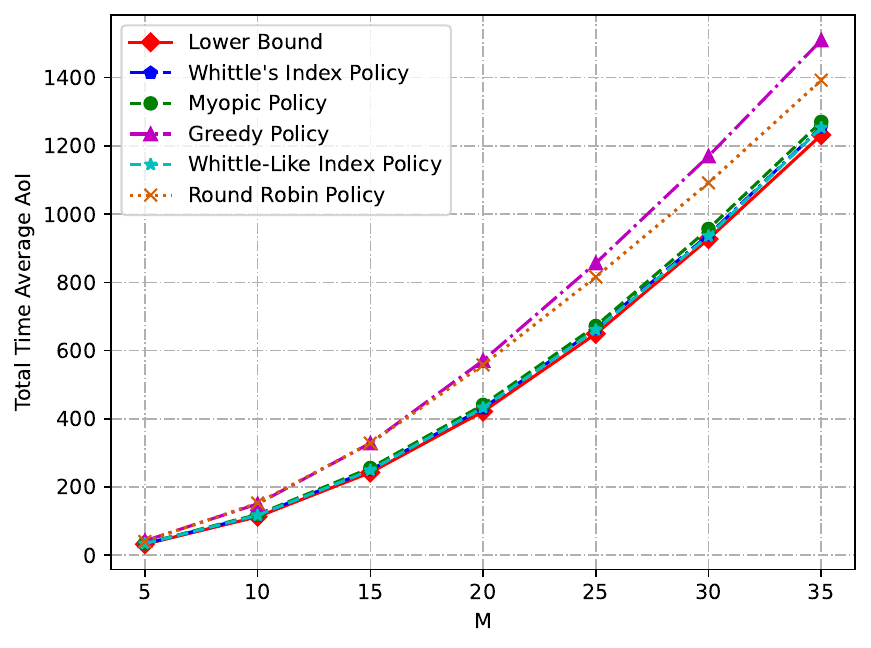}\vspace{-1em}

\caption{The total time-average AoI versus the number of total devices $M$
($K=1$, $\alpha_{i}=\beta_{i}=0.7$). \label{fig:DifferentM}}
\end{figure}

\begin{figure*}[t]
\centering\subfloat[$M=6,K=1$\label{fig:Time_a}]{\includegraphics[width=0.33\textwidth]{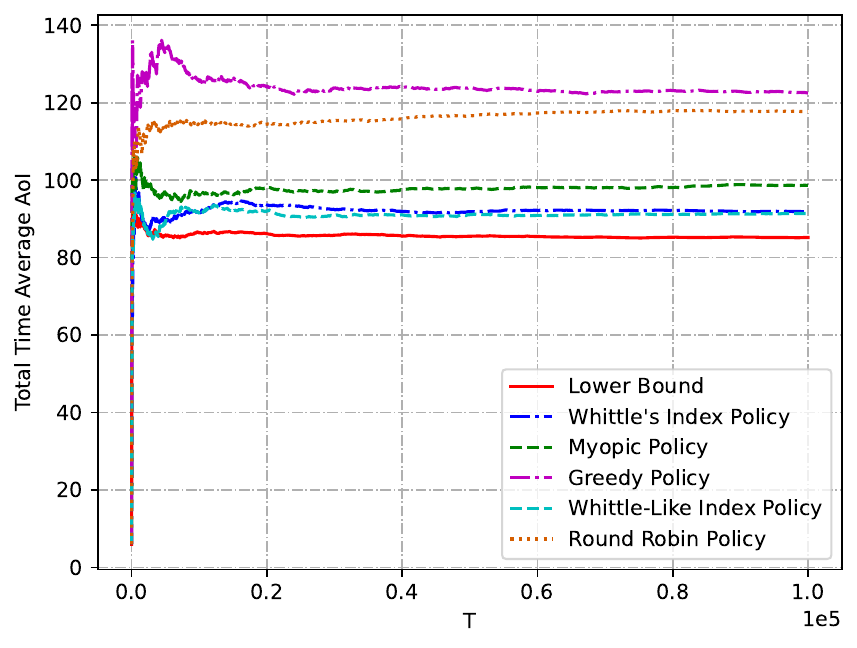}}\subfloat[$M=30,K=5$\label{fig:Time_b}]{\includegraphics[width=0.33\textwidth]{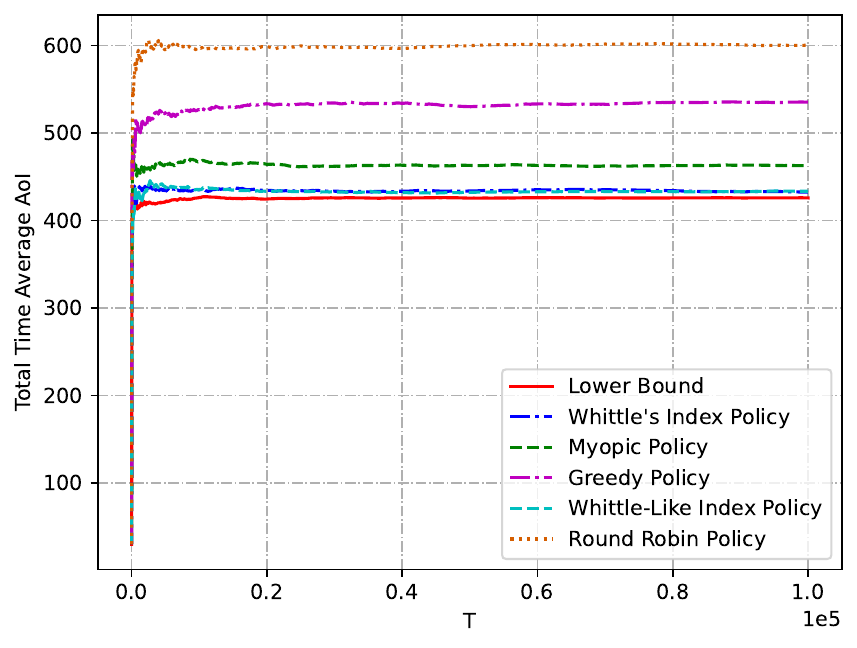}}\subfloat[$M=300,K=50$\label{fig:Time_c}]{\includegraphics[width=0.33\textwidth]{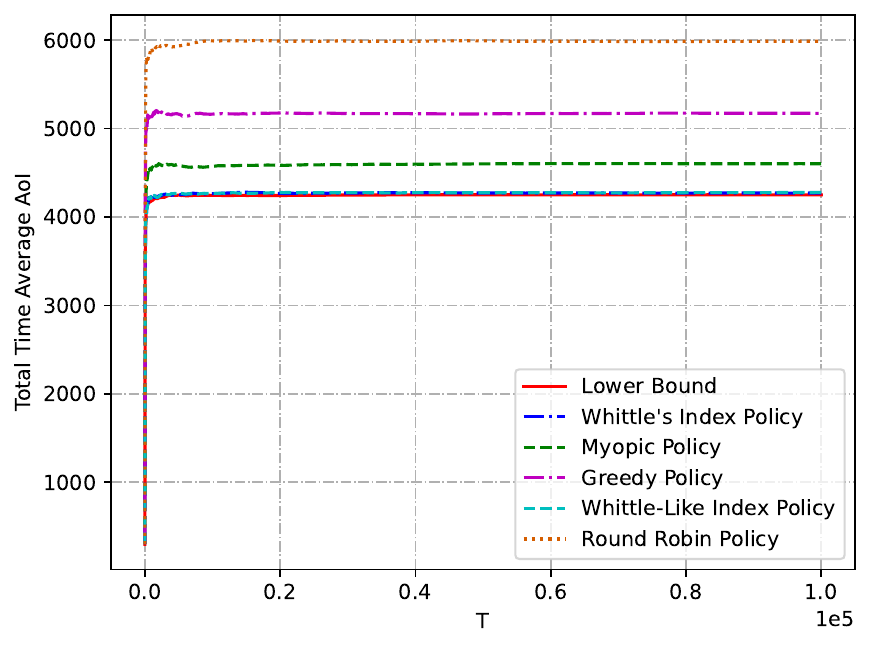}}\caption{The total time-average AoI varying with time $T$ ($\alpha_{(1)}=\beta_{(1)}=0.7,\alpha_{(2)}=0.5,\beta_{(2)}=0.9,\alpha_{(3)}=0.4,\beta_{(3)}=0.8$)\label{fig:Average_AoI_Time}}
\end{figure*}

In Fig. \ref{fig:beta2}, we show the total time-average AoI with
respect to the channel state transition probability of device 2 in
an IoT system with two devices. The channel state transition probabilities
are fixed for device 1, i.e., $\alpha_{1}=0.3,\beta_{1}=0.8$, and
the channel state transition probabilities are set as $\alpha_{2}=1.1-\beta_{2}$
for device 2 so that both channels are positively correlated. It
can be seen that the total time-average AoI increases with $\beta_{2}$
due to the growing probability of the channel being in a BAD state.
We also compare the performance of Whittle's index policy and Whittle-like
index policy with those of greedy policy, myopic policy,{}
Round Robin policy, and optimal policy. We find that the{}
Whittle's index policy and Whittle-like index policy achieve a lower
total time-average AoI than the greedy policy, myopic policy,{}
and Round Robin policy, and perform comparably with the optimal policy.{}
This result highlights the importance of exploiting channel memory
under asymmetric channel statistics. When the channel parameters of
the two devices differ, a large AoI alone does not necessarily indicate
that scheduling the corresponding device is beneficial, because the
device may have a low belief of successful transmission. The proposed
index-based policies use both AoI and channel belief, and therefore
remain close to the optimal policy across different channel conditions.
It is also shown that, when $\beta_{2}$ is approaching $0.8$ at
which the channel parameters of two devices are the same, the performance
of all policies except Round Robin policy become closer. 

In Fig. \ref{fig:DifferentK}, we illustrate the total time-average
AoI with respect to the number of scheduled devices $K$ for a fixed
$M$ and the same channel state transition probabilities. We compare
the performance of Whittle's index policy and Whittle-like index policy
against three baseline policies, along with the performance lower
bound. It is easy to see that the total time-average AoI decreases
with $K$ since the more the communication resources are, the fresher
the information at the destination. We find that Whittle's index policy,
Whittle-like index policy, and myopic policy are close to the lower
bound.{} This result shows that the benefit of intelligent
scheduling is most pronounced in the resource-scarce regime. When
only a small number of devices can be scheduled, each scheduling decision
has a larger impact on information freshness, and policies that ignore
AoI or channel belief suffer a larger performance loss. This explains
why the Round Robin policy has a larger AoI, especially when $K$
is small, because the cyclic schedule cannot prioritize devices with
large AoI or favorable channel beliefs. Moreover, as $K$ increases,
the gap between the policies and the lower bound decreases. This is
because, when almost all devices are scheduled in every time slot,
there is little difference between how the devices are selected by
each policy.

In Fig. \ref{fig:DifferentM}, we illustrate the total time-average
AoI versus the number of total devices $M$ with the same channel
state transition probabilities and compare the performance of five
policies as well as the lower bound. As at most one device can be
scheduled in each time slot (i.e., $K=1$), the transmission opportunity
for each device, on average, decreases with the increase of $M$.
Therefore, the total time-average of AoI is on the rise as the number
of total devices grows. We can also see that, as $M$ increases, the
performance of the greedy policy and Round Robin policy deteriorate
dramatically, while the performance of Whittle's index policy, Whittle-like
index policy, and myopic policy is still close to the lower bound,
which shows the outstanding performance of the proposed policies.{}
This result illustrates the scalability advantage of the proposed
index-based policies. As $M$ increases with $K$ fixed, the system
becomes increasingly resource-constrained and the cost of selecting
an unpromising device becomes larger. The greedy policy can be misled
by large AoI values when the corresponding channel belief is poor,
while Round Robin cannot adapt to either AoI or channel conditions.
In contrast, the Whittle's index and Whittle-like index policies maintain
a small gap to the lower bound, showing that the index structure remains
effective in larger networks.

In Fig. \ref{fig:Average_AoI_Time}, we show the total time-average
AoI over time for different numbers of devices $M$ with a fixed scheduling
ratio $\frac{K}{M}=\frac{1}{6}$. In particular, we divide the devices
into three groups equally, each of which has the same channel state
transition probabilities. We denote $\alpha_{(k)}$ and $\beta_{(k)}$
as the channel state transition probabilities in $k$-th group. It
is shown that our proposed policies have excellent performance in
reducing the AoI compared to the greedy policy. Moreover, the performance
of Whittle's index policy and Whittle-like index policy approach to
the lower bound with the increasing of $M$, which shows the asymptotic
optimality of Whittle's index policy.{} This result
confirms that the performance gain is persistent over time rather
than a transient effect. Across different network sizes, the Whittle's
index and Whittle-like index policies quickly stabilize near the relaxed
lower bound. The close match between these two policies also suggests
that the closed-form Whittle-like index preserves most of the scheduling
benefit of the Whittle's index while avoiding its offline computation
burden.

\section{Conclusion}

In this paper, we have studied device scheduling for information freshness
in an IoT system, where multiple devices update the status of physical
processes over time-correlated Markov channels and the scheduler does
not know the instantaneous channel state before making scheduling
decisions.{} This setting requires balancing AoI-based
freshness and belief-based transmission opportunity. We have formulated
the timely scheduling problem as a partially observable restless multi-armed
bandit problem,{} established the threshold structure
and indexability of the decoupled sub-problem, and proposed a Whittle's
index policy.We further derived a closed-form Whittle-like
index for low-complexity scheduling. The simulation results confirm
the value of combining AoI-based freshness with channel-belief information,
and show that the closed-form Whittle-like index preserves most of
the scheduling benefit with lower implementation complexity. 

Several directions remain for future work. First,
the proposed framework can be extended to heterogeneous IoT devices
with different packet sizes, energy budgets, deadline requirements,
and priority levels. Second, adaptive or learning-augmented index
policies can be developed for non-stationary IoT environments, where
channel statistics or traffic patterns may drift over time. Third,
fairness-aware AoI scheduling, weighted AoI objectives, and multi-metric
formulations that combine AoI with throughput, latency, or energy
consumption are also important extensions.

\bibliographystyle{IEEEtran}
\bibliography{AoI,Published}

\end{document}